\newtheorem{theorem}{Theorem}
\newtheorem{lemma}[theorem]{Lemma}
\newtheorem{corollary}[theorem]{Corollary}
\newtheorem{definition}{Definition}
\newtheorem{proposition}[theorem]{Proposition}
\newcommand{\smsv}{\text{sq-sm}}
\newcommand{\tmsv}{\psi}
\newcommand{\tkappa}{\tilde{\kappa}}
\newcommand{\conv}{\operatorname{conv}}
\newcommand{\hil}{\mathcal{H}}
\newcommand{\convG}{\operatorname{conv}(\mathcal{G})}
\newcommand{\rmd}{\mathrm{d}}
\newcommand{\id}{\mathds{1}}
\begin{document}

\preprint{APS/123-QED}

\title{Resource distillation in convex Gaussian resource theories}

\author{Hyejung H. Jee}
\email{h.jee17@imperial.ac.uk}
\affiliation{Department of Computing, Imperial College London, London SW7 2AZ, UK}
\author{Carlo Sparaciari}
\affiliation{Department of Computing, Imperial College London, London SW7 2AZ, UK}
\affiliation{Department of Physics and Astronomy, University College London, London WC1E 6BT, UK}
\author{Mario Berta}
\affiliation{Department of Computing, Imperial College London, London SW7 2AZ, UK}


\begin{abstract}
It is known that distillation in continuous variable resource theories is impossible when restricted to Gaussian states and operations. To overcome this limitation, we enlarge the theories to include convex mixtures of Gaussian states and operations. This extension is operationally well-motivated since classical randomness is easily accessible. We find that resource distillation becomes possible for convex Gaussian resource theories\,---\,albeit in a limited fashion. We derive this limitation by studying the convex roof extension of a Gaussian resource measure and then go on to show that our bound is tight by means of example protocols for the distillation of squeezing and entanglement.
\end{abstract}

\maketitle


\section{Introduction}\label{sec:intro}

The Gaussian framework is a powerful tool for exploring quantum theory due to its easy accessibility, both theoretically and experimentally. Some preferred physical systems for implementing quantum information processing tasks are therefore in the Gaussian platform, such as optical and optomechanical systems. Despite being a manageable and useful tool for quantum physicists, several limitations on information processing tasks are known to hold for the Gaussian framework. For instance, Gaussian quantum computation does not provide an advantage over classical computation \cite{lloyd1999quantum,bartlett2002universal,menicucci2006universal,ohliger2010limitations}, correcting Gaussian errors with Gaussian operations is impossible \cite{niset2009no}, and distilling entanglement from Gaussian states using Gaussian operations is impossible \cite{fiuravsek2002gaussian,giedke2002characterization,eisert2002distilling}. This latter limitation is a particularly crucial one since the most natural platform for quantum communications and cryptography is a photonic system, and distillation of pure entangled photons is indispensable.

Recently, Gaussian resource theories were formally discussed in \cite{lami2018gaussian}, where the authors study quantum resource theories which are further restricted to Gaussian states and operations. It is shown that resource distillation is generally impossible in such Gaussian resource theories. A natural question to ask is whether there is a simple way to overcome this obstacle. One possibility is to introduce non-Gaussian resources, which then allows for resource distillation~\cite{browne2003driving} as well as other tasks~\cite{lloyd1999quantum,bartlett2002universal,gottesman2001encoding,niset2009no}, but these resources are generally difficult or expensive to produce.


\section{Overview of results}

In this paper, we examine whether classical randomness and post-selection can lift the restrictions on resource distillation in Gaussian resource theory. Specifically, we introduce the notion of \emph{convex Gaussian resource theories}. That is, theories obtained from standard Gaussian resource theories by adding the possibility of mixing Gaussian states and performing operations conditioned on the outcome of Gaussian measurements. The inclusion of the above operations is well-motivated from an operational point of view since, in a laboratory, it is not difficult to create a mixture of quantum states using classical randomness, or to condition on the outcome of a measurement. Our approach is partly inspired by the non-Gaussianity distillation protocol developed in, \cite{takagi2018convex} where convex mixtures of Gaussian states and operations play an important role.

Within this framework, we investigate resource distillations to see whether the restrictions affecting Gaussian resource theories can be overcome. Interestingly, we find that the restriction on resource distillation can be relaxed in convex Gaussian resource theories. The impossibility of resource distillation is now replaced by a limitation on the amount of distillable resources in convex Gaussian resource theories. This still leaves some room for resource distillation and we explore this boundary by constructing resource distillation protocols exploiting convex mixtures of Gaussian states and operations. These examples show that our derived upper limit on the amount of distillable resources can be saturated in some cases\,---\,which implies that it is tight. The main tool we use to prove the new limitation on resource distillation in convex Gaussian resource theories is the convex roof extension of a Gaussian resource measure introduced in \cite{lami2018gaussian}. Additionally, we introduce a multi-resource perspective on Gaussian resource theories.

The rest of the paper is structured as follows. In Sec.~\ref{sec:AddConvToGRT}, we revise the Gaussian quantum resource theories defined in \cite{lami2018gaussian} from the multi-resource theoretical point of view and introduce the concept of convex Gaussian resource theories. In Sec.~\ref{sec:general}, we derive an upper bound on the amount of distillable resources in convex Gaussian resource theories. In Sec.~\ref{sec:examples}, we provide explicit example protocols showing that resource distillation becomes possible in a limited fashion in convex Gaussian resource theories. We conclude in Sec.~\ref{sec:discussion}.


\section{Convex Gaussian resource theories}\label{sec:AddConvToGRT}

\subsection{Quantum resource theories}

A quantum resource theory is a mathematical framework designed to study the role of different physical quantities in physically constrained settings, where these quantities become resourceful for a given task~\cite{chitambar2019quantum}. The aim is to quantify the resource in a quantum state and analyse its change under different sets of quantum operations. Quantum resource theories have been successfully applied to many physical quantities of interest in the quantum information field such as entanglement theory \cite{horodecki_quantum_2009}, the resource theory of magic \cite{veitch2014resource,ahmadi2018quantification}, or the resource theory of asymmetry \cite{gour2008fundamental,lostaglio2015quantum}. There are three key ingredients in a quantum resource theory: the \emph{resource}, the \emph{free states} $\mathcal{F}$, and the \emph{allowed operations} $\mathbb{O}$. The resource is a physical quantity that is difficult or expensive to produce under some restrictions. The free states are states which do not possess any of the resource, and they can be produced for free. Any state which is not in $\mathcal{F}$ is regarded as resourceful. The definition of allowed operations varies among the literature; they are in general defined as operations that cannot create any resource from the free states, but are sometimes further restricted by physical constraints. For example, in entanglement theory, the largest set of operation which preserves the free states (the separable states) is the set of separable operations. However, the conventional set of allowed operations is composed by local operations and classical communications (LOCC), which is a subset of the set of separable operations.


\subsection{Gaussian resource theories}\label{subsec:GaussianQRT}

\paragraph{Setting.} A Gaussian resource theory is a quantum resource theory further restricted to Gaussian states and operations \cite{lami2018gaussian}. For example, the Gaussian resource theory of entanglement is the entanglement theory restricted to those systems described by Gaussian states with allowed operations given by Gaussian LOCC. The aim of Gaussian resource theories is to investigate capabilities and limitations of the Gaussian toolbox under given physical restrictions. This is a useful framework as many quantum information protocols have been proposed and implemented on Gaussian platforms due to their easy accessibility: quantum teleportation using Gaussian states \cite{vaidman1994teleportation,braunstein1998teleportation,furusawa1998unconditional}, bosonic quantum cryptography \cite{cerf2001quantum,grosshans2002continuous,grosshans2003quantum}, Gaussian error correction \cite{niset2009no}, etc. 

Here, we formulate Gaussian resource theories in a slightly different way from \cite{lami2018gaussian}, namely as an instance of multi-resource theories \cite{sparaciari2020first}. A Gaussian resource theory can be seen as a multi-resource theory with two resources. These two resources respectively form a separate resource theory; $\mathcal{R}$ is the theory associated with the resource $f$, while $\mathcal{R}_G$ is the resource theory of non-Gaussianity. The resource theory $\mathcal{R}$ has a set of free states $\mathcal{F}$ and a set of allowed operations $\mathbb{O}$.

In this paper, we only consider resource theories whose set of free states $\mathcal{F}$ satisfies the following assumptions. These are mostly standard assumptions about the set of free states in quantum resource theories formalised in \cite{brandao2015reversible}, except the displacement assumption additionally introduced for the Gaussian setting in \cite{lami2018gaussian}.

\noindent Namely, the set $\mathcal{F}$ is
\begin{enumerate}[(I)]
\item invariant under displacement operations
\item closed under tensor products, partial traces, and permutations of subsystems
\item convex
\item norm-closed.
\end{enumerate}
A prominent example that does not satisfy the above conditions is the resource theory of thermodynamics \cite{brandao2015second,serafini2017quantum}, that fails to satisfy property (I).

For the resource theory of non-Gaussianity $\mathcal{R}_G$, the set of free states is the set of Gaussian states $\mathcal{G}$, and the allowed operations are Gaussian operations $\mathbb{G}$. A Gaussian resource theory can then be built from these two resource theories, $\mathcal{R}$ and $\mathcal{R}_G$. The new free states are free Gaussian states denoted by $\mathcal{F}_G$ which are Gaussian states without any resource, i.e.~$\mathcal{F}_G=\mathcal{F}\cap\mathcal{G}$ as depicted in left panel of Fig.~\ref{fig:GaussianRT}. Since the resourcefulness of Gaussian states depends on their covariance matrices alone due to property I, $\mathcal{F}_G$ can be alternatively represented by the set of covariance matrices of the free Gaussian states, which we denote as $\mathbf{F}_G$. The set of allowed operations of the Gaussian resource theory are operations mapping $\mathcal{F}_G$ to itself. That is, operations that do not create neither resource $f$ nor non-Gaussianity from free Gaussian states. We refer to these operations as \emph{Gaussian allowed operations}, and the corresponding set is denoted by $\mathbb{O}_G$. Note that this set might include probabilistic maps as well as deterministic maps.


\paragraph{No Gaussian resource distillation.} As the no-go statement for Gaussian resource distillation derived in \cite{lami2018gaussian} is important for our results, we introduce it here in more detail. For Gaussian resource theories one can show that $\mathbf{F}_G$, the set of free covariance matrices, is \emph{upward closed}. That is, for any covariance matrices $V$ and $W$, if $V\in\mathbf{F}_G$ and $W\geq V$, then $W\in\mathbf{F}_G$. By exploiting the property of upward closeness, one can introduce a useful resource measure for a Gaussian resource theory with $\mathbf{F}_G$,
\begin{align}\label{eq:kappa}
\kappa\Big(\rho^G[V]\Big) := \min\Big\{t\geq1 \,\big| \,tV\in\mathbf{F}_G\Big\}\,,
\end{align}
where $\rho^G[V]$ is a Gaussian state with covariance matrix $V$. This measure is faithful in the sense that $\kappa(\rho^G)=1$ if and only if $\rho^G\in\mathcal{F}_G$ and furthermore has the property
\begin{align}
\text{$\kappa\left(\rho^G\otimes\sigma^G\right) = \max\Big\{\kappa\left(\rho^G\right),\kappa\left(\sigma^G\right)\Big\}$ for $\rho^G, \sigma^G\in\mathcal{G}$.}
\end{align}
This measure is a resource monotone for Gaussian resource theory satisfying Assumptions (I)--(IV). This implies the following no-go statement for resource distillation~\cite[Thm.~1]{lami2018gaussian}.

\begin{proposition}\label{thm:nogo_theorem}
For a Gaussian resource theory satisfying (I)--(IV) with
\begin{align}
\text{$\kappa\left(\rho^G\right)<\kappa\left(\sigma^G\right)$ for $\rho^G, \sigma^G\in\mathcal{G}$,}
\end{align}
there does not exist $\Gamma\in\mathbb{O}_G$ with $\Gamma\left(\left(\rho^G\right)^{\otimes n}\right)=\sigma^G$ for any $n\in\mathbb{N}$.
\end{proposition}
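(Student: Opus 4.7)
The plan is to prove the statement by contradiction, leveraging the two structural properties of $\kappa$ already stated in the excerpt: monotonicity under $\mathbb{O}_G$ and the max-rule under tensor products. Suppose toward a contradiction that there exist $n \in \mathbb{N}$ and $\Gamma \in \mathbb{O}_G$ such that $\Gamma\bigl((\rho^G)^{\otimes n}\bigr) = \sigma^G$. The entire proof then consists of chasing the value of $\kappa$ through this equality using exactly those two properties.

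First I would invoke the tensor-product identity $\kappa(\rho^G \otimes \rho^G) = \max\{\kappa(\rho^G),\kappa(\rho^G)\} = \kappa(\rho^G)$ and, by a trivial induction on $n$, upgrade this to $\kappa\bigl((\rho^G)^{\otimes n}\bigr) = \kappa(\rho^G)$. Next I would apply the fact (asserted in the paragraph immediately preceding the proposition) that $\kappa$ is a resource monotone for any Gaussian resource theory satisfying (I)--(IV); that is, for every $\Gamma \in \mathbb{O}_G$ and every Gaussian input $\tau$, one has $\kappa(\Gamma(\tau)) \le \kappa(\tau)$. Applied to $\tau = (\rho^G)^{\otimes n}$ this yields
\begin{equation}
\kappa(\sigma^G) \;=\; \kappa\bigl(\Gamma((\rho^G)^{\otimes n})\bigr) \;\le\; \kappa\bigl((\rho^G)^{\otimes n}\bigr) \;=\; \kappa(\rho^G),
\end{equation}
which directly contradicts the hypothesis $\kappa(\rho^G) < \kappa(\sigma^G)$.

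The only subtlety, and the one point I would actually spell out in the write-up, concerns the case in which $\Gamma$ is a probabilistic map (recall $\mathbb{O}_G$ was allowed to contain such maps). In that case the relevant monotonicity inequality should be interpreted as the statement that $\kappa$ does not increase on the (normalized) post-selected output, which is consistent with how $\kappa$ is defined purely in terms of the covariance matrix of the output Gaussian state. Since the target $\sigma^G$ is a fixed Gaussian state with a fixed covariance matrix, the probabilistic case reduces immediately to the deterministic inequality above, so the contradiction persists. No further machinery, and in particular no explicit use of upward-closedness of $\mathbf{F}_G$ beyond what is already baked into the monotonicity of $\kappa$, is needed. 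There is no genuine obstacle here: the content of the proposition is essentially that the two properties of $\kappa$ cited just before it force the impossibility, so the proof is a short two-line deduction once those properties are in hand.
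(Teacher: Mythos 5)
Your proof is correct and is essentially the same deduction the paper intends: it states that $\kappa$ is a monotone under $\mathbb{O}_G$ and satisfies the tensor-product max-rule, and then asserts that the proposition follows, which is exactly your two-line chain $\kappa(\sigma^G)=\kappa(\Gamma((\rho^G)^{\otimes n}))\leq\kappa((\rho^G)^{\otimes n})=\kappa(\rho^G)$. Your remark on the probabilistic (trace non-increasing) case matches the paper's own parenthetical observation that the monotonicity already covers post-selected outputs, so nothing further is needed.
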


It is worth noting that Prop.~\ref{thm:nogo_theorem} includes {\it trace non-increasing} Gaussian allowed operations, meaning that even {\it probabilistic} resource distillation is impossible. This is a critical limitation as it rules out the possibility of resource distillation even with many copies of an initial state.

\begin{figure}[t]
\centering
\includegraphics[width=\columnwidth]{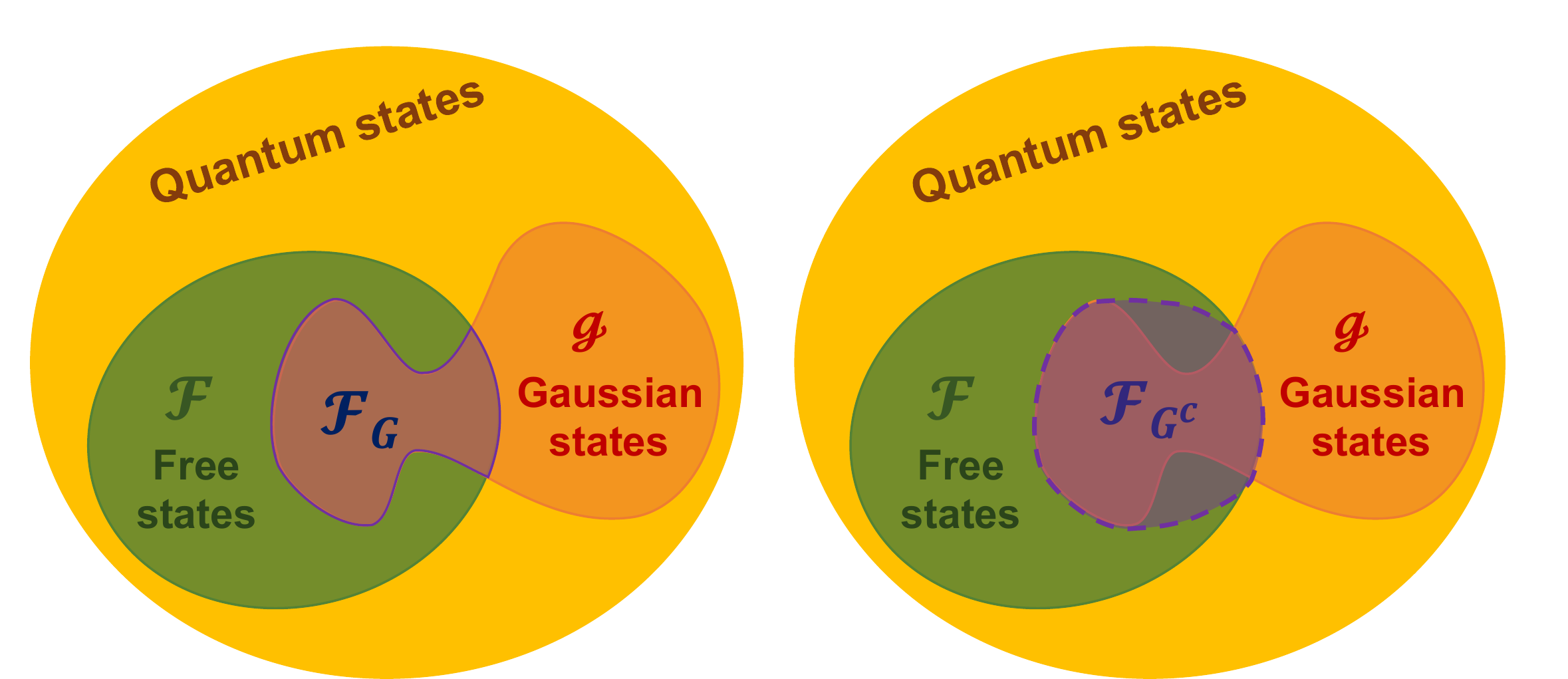}
\caption{The geometrical structures of the state spaces in resource theories. \textbf{Left}: In Gaussian resource theories, the set of free states is the interception of the original free state set $\mathcal{F}$ and the set of Gaussian states $\mathcal{G}$, i.e. $\mathcal{F}_G=$ $\mathcal{F}\cap\mathcal{G}$. \textbf{Right}: In convex Gaussian resource theories, the set of free states is the convex hull of $\mathcal{F}_G$, i.e. $\mathcal{F}_{G^c} = \conv(\mathcal{F}_G) = \conv(\mathcal{F}\cap\mathcal{G})$.}
\label{fig:GaussianRT}
\end{figure}


\subsection{Convex Gaussian resource theories}\label{subsec:convexGRT}

An important feature of the set of Gaussian states is that it is not convex since, in general, a classical mixture of Gaussian states is non-Gaussian. Although working with non-Gaussian mixtures of Gaussian states can be mathematically demanding, it is not difficult to create these states in the laboratory using Gaussian states and Gaussian operations. Indeed, we can obtain a convex mixture by performing probabilistic Gaussian operations over a system described by a Gaussian state. An example of probabilistic Gaussian operations is given by Gaussian operations conditioned on the outcome of measurements performed on other modes. Therefore, including non-Gaussian mixtures to the state space seems well-motivated from an operational point of view. It is also worth remarking that this inclusion has been studied in the context of the resource theory of non-Gaussianity in \cite{takagi2018convex}, and the convex resources of Gaussian operations turned out to be useful for distillation of non-Gaussianity.

In this section, we introduce \emph{convex Gaussian resource theories} which take into account the fact that creating convex mixtures of Gaussian states is not difficult in practice. We formulate these theories based on Gaussian resource theories defined in the last section. Now the state space of the theories is given by the convex hull of Gaussian states
\begin{align}
\conv(\mathcal{G}) = \left\{ \int \rmd \lambda \,p(\lambda) \, \rho^G(\lambda)
\ \Big\vert \ \rho^G(\lambda) \in \mathcal{G}  \right\}\,,
\end{align}
where $p(\lambda)$ is a probability distribution. The above set includes all Gaussian states as well as classical mixtures of Gaussian states.

To construct a convex Gaussian resource theory for a resource $f$, we start from the corresponding Gaussian resource theory of the resource $f$ with the set of free states $\mathcal{F}_G$ and the allowed operations $\mathbb{O}_G$. Then, the set of free states of the convex Gaussian resource theory, that we denote by $\mathcal{F}_{G^c}$ is given by the convex hull of $\mathcal{F}_G$. That is, $\mathcal{F}_{G^c}=\conv(\mathcal{F}_G)$ as depicted in the right panel of Fig.~\ref{fig:GaussianRT}. The set of allowed operations of the convex Gaussian resource theory, which we denote as $\mathbb{O}_{G^c}$, is defined as follows.

\begin{definition}\label{def:all_op_cgt}
The allowed operations $\mathbb{O}_{G^c}$ of a convex Gaussian resource theory are composed by two kinds of operations:
\begin{enumerate}
\item Appending an ancillary system described by a free state $\rho \in \mathcal{F}_{G^c}$.
\item Applying a mixture of Gaussian operations conditioned on the outcome of a homodyne measurement,
\begin{align}\label{eq:generalForm_convG}
\Gamma(\rho_{AB})=\int \rmd q \, \Phi_A^{q} \otimes M_B^{q}\left(\rho_{AB}\right)\,,
\end{align}
where the Gaussian operation $\Phi^{q} \in \mathbb{O}_{G}$ preserves the free states $\mathcal{F}_G$, and $M^{q} = \bra{q} \cdot \ket{q}$ is a selective homodyne measurement.
\end{enumerate}
\end{definition}

A hidden assumption we are making here is that the partial selective homodyne measurement $M^q$ needs to be an allowed operation too. That is, it needs to map the set $\mathcal{F}_{G}$ into itself. For the theories we consider explicitly, namely those of entanglement and squeezing, we show that this is indeed the case in App.~\ref{app:Part_sel_HM}. It is easy to see that the allowed operations in Def.~\ref{def:all_op_cgt} map the free set $\mathcal{F}_{G^c}$ into itself. Indeed, given a generic free state
\begin{align}
\rho_{\text{free}}=\int \rmd \lambda \, p(\lambda) \, \rho_{\text{free}}^G(\lambda)\,,
\end{align}
where $\rho_{\text{free}}^G(\lambda) \in \mathcal{F}_{G}$ for all $\lambda$, appending a free state (operation 1) does not map $\rho_{\text{free}}$ outside the set. On the other hand, a mixture of conditioned allowed operations, represented by $\Gamma$ (operation 2), is such that
\begin{align}\label{eq:allowed_ops_free_states}
\Gamma \left( \rho_{\text{free}} \right)=\int \rmd q \, \rmd \lambda \, p(\lambda) \, \Phi_A^{q} \otimes M_B^{q}\left( \rho_{\text{free}}^G(\lambda) \right)\,,
\end{align}
where we assume wlog that the Hilbert space of $\rho_{\text{free}}$ can be divided in two partitions $A$ and $B$. Since both $\Phi_A^{q}$ and $M_B^q \in \mathbb{O}_{G}$, each
state $\rho_{\text{free}}^G(\lambda)$ is mapped to another state in $\mathcal{F}_{G}$. Then, the outcome of Eq.~\eqref{eq:allowed_ops_free_states} is a mixtures of free states, which belongs to the set $\mathcal{F}_{G^c}$.

We can now investigate resource distillations in convex Gaussian resource theories. The main question we aim to address is whether distillation is still impossible if convexity is allowed in Gaussian resource theories. In other words, we study whether a version of Prop.~\ref{thm:nogo_theorem} holds for these theories or not. This question is equivalent to asking whether we can distil the resource $f$ from a resourceful Gaussian state $\rho$ in $\convG/\mathcal{F}_{G^c}$, by only using operations in $\mathbb{O}_{G^c}$. In the next section, we show that, if we are allowed to use classical mixtures of Gaussian states and operations, the restriction on resource distillation is weakened, and we precisely characterise these new limitations. Finally, it is worth remarking that the opposite case, distillation of non-Gaussianity from a state $\rho$ in $\mathcal{F}/\mathcal{F}_{G^c}$ only using operations in $\mathbb{O}_{G^c}$, has already been studied in \cite{takagi2018convex}. The authors showed that one can distil non-Gaussianity from any non-Gaussian state by filtering out the Gaussian part using post-selection. Their distillation protocol consists of a beam splitter and a homodyne measurement, which belong to $\mathbb{O}_{G^c}$ in most convex Gaussian resource theories.


\section{Limitations on resource distillation}\label{sec:general}

To study resource distillation in general convex Gaussian resource theories, we need an appropriate resource measure to work with. The measure defined in Eq.~\eqref{eq:kappa} is a good candidate, but unfortunately it cannot be applied to non-Gaussian convex mixtures of Gaussian states. Therefore, we define the following new resource measure for the convex hull of Gaussian states; for any $\rho\in\convG$, 
\begin{align}\label{eq:tkappa}
\tkappa(\rho) := \inf_{\substack{\{p(\lambda),\rho^G(\lambda)\}_{\lambda}\\ \rho=\int \rmd\lambda \,p(\lambda)\rho^G(\lambda)}} \int \rmd\lambda\, p(\lambda) \, \kappa\left(\rho^G(\lambda)\right),
\end{align}
where the infimum is over all Gaussian decompositions of the state $\rho$ including discrete ones. This measure is the \emph{convex roof extension} \cite{chitambar2019quantum} of the monotone $\kappa$ in Eq.~\eqref{eq:kappa}. It is a valid resource monotone satisfying faithfulness and monotonicity, as well as other useful properties listed in the following lemma.

\begin{restatable}{lemma}{proptk}
\label{lem:PropertiesTkappa}
The resource measure $\tilde{\kappa}$ defined in Eq.~\eqref{eq:tkappa} has the following properties:
\begin{enumerate}[(i)]
\item It coincides with the measure $\kappa$ for Gaussian states
\begin{align}
\tkappa(\rho^G) = \kappa(\rho^G) \quad \forall \, \rho^G\in\mathcal{G}\,.
\end{align}
\item It is \emph{convex}
\begin{align}
\tkappa
\left(\int d\lambda\, p(\lambda) \rho(\lambda)\right) 
\leq
\int d\lambda\, p(\lambda) \, \tkappa(\rho(\lambda))
\end{align}
for any probability distribution $p(\lambda)$ and set of states $\left\{ \rho(\lambda) \right\}_{\lambda}$ in $\convG$.
\item Its value on tensor products of Gaussian states is given by the most resourceful state
\begin{align}
\tilde{\kappa}\left(\rho^G\otimes\tau^G\right) = \max\Big\{\tilde{\kappa}\left(\rho^G\right),\tilde{\kappa}\left(\tau^G\right)\Big\}\,\quad\forall\rho^G,\tau^G\in\mathcal{G}\,.
\end{align}
\item It is monotonic under the allowed operations $\mathbb{O}_{G^c}$
\begin{align}
\tkappa(\rho)\geq\tkappa(\Gamma(\rho)) \quad \forall \, \Gamma\in\mathbb{O}_{G^c}\,.
\end{align}
\item It is \emph{faithful}
\begin{align}
\tkappa(\rho)=1 \iff \rho\in\mathcal{F}_{G^c}\,.
\end{align}
\end{enumerate}
\end{restatable}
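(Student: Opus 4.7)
My plan is to prove the five properties in the order (i)$\to$(iii)$\to$(ii)$\to$(iv)$\to$(v), each leaning on the previous. For (i), the trivial one-term decomposition yields $\tilde{\kappa}(\rho^G) \le \kappa(\rho^G)$. The matching lower bound rests on two properties of $\kappa$ as a function of the covariance matrix: that it is monotone non-increasing in the Löwner order (immediate from upward-closedness of $\mathbf{F}_G$), and that it is convex. The latter I would establish by a harmonic-mean identity: if $V = \sum_i p_i V_i$ with $\kappa(V_i) = t_i$ and $t := \bigl(\sum_i p_i / t_i\bigr)^{-1}$, then $tV = \sum_i (p_i t / t_i)(t_i V_i)$ is a convex combination of elements of $\mathbf{F}_G$, so by convexity of $\mathbf{F}_G$ one obtains $\kappa(V) \le t$, and AM--HM gives $t \le \sum_i p_i t_i$. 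Applied to any Gaussian decomposition $\rho^G[V] = \int p(\lambda) \rho^G[V_\lambda, \mu_\lambda] d\lambda$, direct computation of the mixture covariance gives $V = \int p(\lambda) V_\lambda d\lambda + \operatorname{Cov}(\mu_\lambda) \ge \int p(\lambda) V_\lambda d\lambda$, and combining the two properties yields $\kappa(V) \le \int p(\lambda) \kappa(V_\lambda) d\lambda$; taking the infimum delivers (i). Property (iii) then follows at once from (i) and the tensor identity for $\kappa$ stated just above Prop.~\ref{thm:nogo_theorem}.

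Property (ii) is the standard convex-roof calculation: combining $\varepsilon$-optimal Gaussian decompositions of each $\rho(\lambda)$ with the weights $p(\lambda)$ produces a valid Gaussian decomposition of $\int p(\lambda)\rho(\lambda) d\lambda$ whose average $\kappa$-value exceeds $\int p(\lambda) \tilde{\kappa}(\rho(\lambda)) d\lambda$ by at most $\varepsilon$, and $\varepsilon \to 0$ gives convexity. For (iv), I would treat the two operation classes in Def.~\ref{def:all_op_cgt} separately. Appending a free $\sigma = \int q(\mu) \sigma^G(\mu) d\mu$ with $\sigma^G(\mu) \in \mathcal{F}_G$ is handled by tensoring decompositions and invoking the tensor identity of $\kappa$, which makes each product's $\kappa$-value equal to that of the $\rho^G(\lambda)$-factor. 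For a conditioned operation $\Gamma$ acting on a single Gaussian $\rho^G$, each branch $\Phi^q \otimes M^q(\rho^G)$ is a subnormalized Gaussian whose normalization has $\kappa$-value at most $\kappa(\rho^G)$ by the $\mathbb{O}_G$-monotonicity of $\kappa$ (Prop.~\ref{thm:nogo_theorem}); the $q$-integral then provides an explicit Gaussian decomposition of $\Gamma(\rho^G)$ of average at most $\kappa(\rho^G)$, so $\tilde{\kappa}(\Gamma(\rho^G)) \le \kappa(\rho^G)$. Combining with (ii) and taking infima over Gaussian decompositions extends monotonicity to arbitrary $\rho \in \convG$.

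For (v), the direction $\rho \in \mathcal{F}_{G^c} \Rightarrow \tilde{\kappa}(\rho) = 1$ is immediate since $\rho$ admits a decomposition into $\mathcal{F}_G$-elements on which $\kappa \equiv 1$, while $\kappa \ge 1$ in general. For the converse, I would pick $\varepsilon_n \to 0$ and Gaussian decompositions with $\int p_n(\lambda)(\kappa - 1)d\lambda \le \varepsilon_n$, and apply Markov's inequality to split each into a good piece supported on $\{\kappa \le 1 + \sqrt{\varepsilon_n}\}$ and a bad piece of weight at most $\sqrt{\varepsilon_n}$. Along a subsequence, the good pieces approximate $\rho$ in trace norm while being mixtures of Gaussian states whose covariances $V$ satisfy $(1 + \sqrt{\varepsilon_n}) V \in \mathbf{F}_G$, so norm-closedness of $\mathbf{F}_G$ (assumption IV) forces the limit components to lie in $\mathcal{F}_G$, placing $\rho$ in the closure of $\mathcal{F}_{G^c}$. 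I expect this final closure step to be the main obstacle, since the convex hull of a norm-closed set need not itself be norm-closed in infinite dimensions; making it rigorous likely requires invoking compactness of the Gaussian parameter space for the concrete theories under study, or replacing $\mathcal{F}_{G^c}$ by its norm closure throughout.
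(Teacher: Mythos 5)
Your overall architecture---(i) via matching upper and lower bounds, (iii) from (i) plus the tensor rule for $\kappa$, (ii) by composing near-optimal decompositions, (iv) by treating the two operation classes separately and reducing to $\kappa$-monotonicity on each Gaussian branch, (v) by pointwise saturation of $\kappa\geq 1$---matches the paper's proof in App.~\ref{app:prop_tkappa_proof}. The one step where you genuinely diverge is the lower bound $\tkappa(\rho^G[V])\geq\kappa(\rho^G[V])$ in (i), and that is where there is a gap. Your harmonic-mean argument needs the set $\mathbf{F}_G$ of free covariance matrices to be \emph{convex}, so that $tV=\sum_i(p_i t/t_i)(t_iV_i)$ lands in $\mathbf{F}_G$. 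But the only structural property of $\mathbf{F}_G$ the paper establishes or uses is upward closedness, and convexity of $\mathbf{F}_G$ does not follow from convexity of $\mathcal{F}$ (assumption III) by the displacement-averaging trick that proves upward closedness: a mixture of Gaussian states with different covariance matrices is never the Gaussian state with the averaged covariance matrix, so one cannot exhibit $\rho^G[pV_0+(1-p)V_1]$ as a mixture of free states. The paper sidesteps this entirely by invoking a structural result of Wolf, Giedke and Cirac: in \emph{any} Gaussian decomposition $\rho^G[V]=\int\rmd\lambda\,p(\lambda)\,\rho^G(\lambda)$ of a Gaussian state, every component $\rho^G(\lambda)$ has covariance matrix $W(\lambda)\leq V$. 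Combined with the L\"owner monotonicity of $\kappa$ (which you correctly derive from upward closedness), this gives the \emph{pointwise} bound $\kappa(\rho^G(\lambda))\geq\kappa(\rho^G[V])$ for every $\lambda$, hence the bound on the average, with no convexity of $\kappa$ needed. Unless you separately prove that $\mathbf{F}_G$ is convex for every theory satisfying (I)--(IV), you should replace your argument with this one.

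The remaining parts are essentially identical to the paper's, with two remarks. In (iv) your phrase ``by the $\mathbb{O}_G$-monotonicity of $\kappa$'' silently includes monotonicity under the partial selective homodyne measurement $M^q$; this is exactly the ``hidden assumption'' the paper flags after Def.~\ref{def:all_op_cgt} and verifies only for squeezing and entanglement in App.~\ref{app:Part_sel_HM}, so it needs to be cited as a separate ingredient rather than folded into Prop.~\ref{thm:nogo_theorem}. In (v) you are in fact \emph{more} careful than the paper: the published proof takes ``the optimal decomposition'' and concludes $\kappa(\rho^G(\lambda))=1$ almost everywhere, tacitly assuming the infimum in Eq.~\eqref{eq:tkappa} is attained, whereas your $\varepsilon$-optimal/Markov argument correctly identifies that without attainment one only places $\rho$ in the norm closure of $\mathcal{F}_{G^c}$. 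That residual issue is real but is equally present in the paper's own proof; the fixes you suggest (attainment via compactness of the decompositions, or closing $\mathcal{F}_{G^c}$) are the right ones.
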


The proof is given in App.~\ref{app:prop_tkappa_proof}. Using the resource measure $\tilde{\kappa}$ and its properties described in Lem.~\ref{lem:PropertiesTkappa}, we can prove the following theorem for resource distillation in convex Gaussian resource theories.  This theorem shows that no allowed operation can increase the value of the measure $\tkappa$ above that of the most resourceful state composing the initial mixture.

\begin{theorem}\label{thm:noGmap}
Let $\Gamma\in\mathbb{O}_{G^c}$ and $\rho=\int \rmd\lambda \, p(\lambda)\rho^G(\lambda) \in\conv(\mathcal{G})$. Then, we have for any $n$ that
\begin{align}
\tkappa\left(\Gamma\left(\rho^{\otimes n}\right)\right)\leq\tkappa\left(\rho^G_{\max}\right)\,,
\end{align}
where $\rho^G_{\max}$ denotes the most resourceful element in the set $\{\rho^G(\lambda)\}_{\lambda}$.
\end{theorem}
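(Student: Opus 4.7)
The plan is to assemble the result directly from the three key properties of $\tkappa$ given in Lem.~\ref{lem:PropertiesTkappa}: convexity (ii), the tensor-product/max formula for Gaussians (iii), and monotonicity under $\mathbb{O}_{G^c}$ (iv). The strategy is to first bound $\tkappa(\rho^{\otimes n})$ in terms of $\tkappa(\rho^G_{\max})$, and then absorb $\Gamma$ at the end via monotonicity.

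First, I would expand the $n$-fold tensor power using the given Gaussian decomposition of $\rho$:
\begin{align}
\rho^{\otimes n}=\int \rmd\lambda_1\cdots \rmd\lambda_n\, p(\lambda_1)\cdots p(\lambda_n)\, \rho^G(\lambda_1)\otimes\cdots\otimes\rho^G(\lambda_n)\,.
\end{align}
This displays $\rho^{\otimes n}$ as a convex mixture of tensor products of Gaussian states, so the convexity property (ii) of $\tkappa$ applies and yields
\begin{align}
\tkappa\left(\rho^{\otimes n}\right)\leq \int \rmd\lambda_1\cdots \rmd\lambda_n\, p(\lambda_1)\cdots p(\lambda_n)\, \tkappa\Big(\rho^G(\lambda_1)\otimes\cdots\otimes\rho^G(\lambda_n)\Big)\,.
\end{align}

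Next, I would iterate property (iii) to collapse the tensor product inside $\tkappa$. A short induction on $n$ gives $\tkappa(\rho^G(\lambda_1)\otimes\cdots\otimes\rho^G(\lambda_n)) = \max_{i\in\{1,\ldots,n\}}\tkappa(\rho^G(\lambda_i))$, which is in turn upper bounded by $\tkappa(\rho^G_{\max})$ by definition of $\rho^G_{\max}$. Substituting this bound and using that $p$ is a probability density so the integrals of the weights multiply to $1$, one obtains
\begin{align}
\tkappa\left(\rho^{\otimes n}\right)\leq \tkappa\left(\rho^G_{\max}\right)\,.
\end{align}
Finally, applying monotonicity (iv) of $\tkappa$ under the allowed operation $\Gamma\in\mathbb{O}_{G^c}$ gives $\tkappa(\Gamma(\rho^{\otimes n}))\leq \tkappa(\rho^{\otimes n})\leq \tkappa(\rho^G_{\max})$, which is the desired inequality.

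There is no real obstacle in the argument; everything follows by stitching together the properties established in Lem.~\ref{lem:PropertiesTkappa}. The only mildly non-trivial step is the iteration of property (iii) from the pairwise statement to the $n$-fold one, which is a routine induction using associativity of the tensor product and the fact that $\max\{a,\max\{b,c\}\}=\max\{a,b,c\}$. One should also be slightly careful that the convexity property (ii) as stated is phrased for integrals over measurable families, so that the continuous-index decomposition of $\rho$ is handled cleanly; this is precisely how the convex roof (Eq.~\eqref{eq:tkappa}) is set up, so no extra work is needed.
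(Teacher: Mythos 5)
Your proposal is correct and follows essentially the same route as the paper's proof: expand $\rho^{\otimes n}$ as a product mixture of tensor products of Gaussians, apply convexity (ii), bound each term via the iterated tensor-product property (iii) by $\tkappa(\rho^G_{\max})$, and invoke monotonicity (iv) to absorb $\Gamma$ (the paper applies monotonicity first rather than last, which is immaterial). Your explicit remark about inducting property (iii) from two factors to $n$ is a small point the paper leaves implicit, but nothing of substance differs.
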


\begin{proof}
By exploiting monotonicity of the measure $\tkappa$  with respect to the allowed operations in $\mathbb{O}_{G^c}$, it follows that $\tkappa\left(\Gamma\left(\rho^{\otimes n}\right)\right)\leq\tkappa\left(\rho^{\otimes n}\right)$. For the given decomposition of $\rho$ we have
\begin{align}
\rho^{\otimes n} = \int \rmd \boldsymbol{\lambda} \, \bar{p}(\boldsymbol{\lambda}) \, \bar{\rho}^G(\boldsymbol{\lambda})\,,
\end{align}
where $\boldsymbol{\lambda} = \left( \lambda_1,\ldots,\lambda_n \right)$, the probability distribution is $\bar{p}(\boldsymbol{\lambda}) = \Pi_{i=1}^n \, p(\lambda_i)$, and the state is $\bar{\rho}^G(\boldsymbol{\lambda}) = \otimes_{i=1}^n \, \rho^G(\lambda_i)$. Then, it follows that
\begin{align}
\tkappa\left(\rho^{\otimes n}\right)
&\leq
\int \rmd \boldsymbol{\lambda} \, \bar{p}(\boldsymbol{\lambda}) \, \tkappa \left( \bar{\rho}^G(\boldsymbol{\lambda}) \right) \\
&\leq
\int \rmd \boldsymbol{\lambda} \, \bar{p}(\boldsymbol{\lambda}) \, \tkappa \left( \rho^G_{\max} \right)
=
\tkappa \left( \rho^G_{\max} \right)\,,
\end{align}
where the first inequality follows from convexity of $\tkappa$, property (ii), the second inequality from property (iii), and the state $\rho^G_{\max}$ is the one in $\left\{ \rho^G(\lambda) \right\}_{\lambda}$.
\end{proof}

A limitation for resource distillation in convex Gaussian resource theories naturally follows from Thm.~\ref{thm:noGmap}.

\begin{corollary}\label{cor:NewNGTheorem}
Let $\rho=\int \rmd\lambda\,p(\lambda)\rho^G(\lambda)\in\conv(\mathcal{G})$ with $\tau\in\convG$ such that $\tkappa(\tau)>\tkappa(\rho_{\max}^G)$ for the most resourceful element $\rho_{\max}^G$ in the set $\{\rho^G(\lambda)\}_{\lambda}$. Then, for any $n \in \mathbb{N}$ there does not exist $\Gamma\in\mathbb{O}_{G^c}$ such that $\Gamma\left(\rho^{\otimes n}\right)=\tau$.
\end{corollary}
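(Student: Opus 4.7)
The plan is to obtain this as an immediate contrapositive of Theorem~\ref{thm:noGmap}, so no new machinery is required. The argument is: if some $\Gamma \in \mathbb{O}_{G^c}$ did achieve $\Gamma(\rho^{\otimes n}) = \tau$, then applying Theorem~\ref{thm:noGmap} to the decomposition $\rho=\int \rmd\lambda\,p(\lambda)\rho^G(\lambda)$ fixed in the hypothesis would give $\tkappa(\tau) = \tkappa(\Gamma(\rho^{\otimes n})) \leq \tkappa(\rho^G_{\max})$, contradicting the strict inequality $\tkappa(\tau) > \tkappa(\rho^G_{\max})$ assumed in the statement.

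In writing this up I would spend at most two or three lines: introduce a contradiction hypothesis, cite Theorem~\ref{thm:noGmap} to upper bound $\tkappa(\tau)$, and conclude. There is no analytic obstacle at this stage, since all the real content — monotonicity under $\mathbb{O}_{G^c}$, convexity, and the tensor-product identity $\tkappa(\rho^G \otimes \tau^G) = \max\{\tkappa(\rho^G), \tkappa(\tau^G)\}$ on Gaussian states — is already packaged inside Theorem~\ref{thm:noGmap} through properties (ii)--(iv) of Lemma~\ref{lem:PropertiesTkappa}.

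The one subtlety worth flagging explicitly in the proof is that $\rho^G_{\max}$ refers specifically to the most resourceful component of the decomposition chosen in the hypothesis, not to any optimal decomposition of $\rho$ attaining the infimum in Eq.~\eqref{eq:tkappa}. This is harmless, and indeed is what makes the obstruction useful: the bound depends on the \emph{particular} Gaussian mixture being used as the starting resource, so the corollary should be read as a non-trivial constraint tied to how $\rho$ is prepared as a mixture of Gaussian states, strengthening the no-go statement relative to what one would get by using $\tkappa(\rho)$ alone.
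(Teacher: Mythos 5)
Your proposal is correct and matches the paper's intent exactly: the corollary is stated as an immediate consequence of Thm.~\ref{thm:noGmap}, and the contrapositive argument you give (assume $\Gamma(\rho^{\otimes n})=\tau$, apply the theorem to get $\tkappa(\tau)\leq\tkappa(\rho^G_{\max})$, contradict the hypothesis) is precisely the one-line deduction the authors leave implicit. Your remark that $\rho^G_{\max}$ is tied to the particular decomposition fixed in the hypothesis, not an optimal one, is also consistent with how the paper uses the result.
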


Let us emphasise that this upper bound is only applicable to $\convG$, and it is not valid for general non-Gaussian states. Nevertheless, Cor.~\ref{cor:NewNGTheorem} states that, in convex Gaussian resource theories, one can never distil more resourceful states than those already present in the mixture\,---\,even with multiple copies of it. That is, distillation corresponds to the ability of identifying and intensifying a resource which is already present in the mixture. As such, Cor.~\ref{cor:NewNGTheorem} leaves a possibility of resource distillation that might be useful in some specific scenarios. For example, it does not rule out the possibilities for error correction or purification.

In the next section, we will show that one can find simple examples of resource distillation protocols in convex Gaussian resource theories. This examples will show that the upper bound on the amount of distillable resources characterised in Cor.~\ref{cor:NewNGTheorem} can be achieved in some cases.


\section{Convex distillation protocols}\label{sec:examples}

\subsection{Motivation}

Here, we provide explicit examples which show that we can still perform resource distillation in convex Gaussian resource theories. We start with the convex Gaussian resource theory of \emph{squeezing} and then go on to discuss \emph{entanglement}. We see the former as a useful toy model where our intuition for protocols can be built, while the latter corresponds to the most prominent resource theory.


\subsection{Squeezing distillation}

\paragraph{One-shot deterministic case.} As the aim is to demonstrate that resource distillation is possible in convex Gaussian resource theories, we can consider the simplest scenario first. Our scheme seems nonetheless fairly general and we expect it work for other more complicated settings as well. Consider a single-mode squeezed vacuum (SMSV) state with squeezing parameter $r$ along the $\hat{x}$ quadrature, which we denote as $\ket{0,r}$ (cf.~App.~\ref{app:Gaussian}). Suppose the system experiences a random displacement noise with probability $p$, so that its state becomes a mixture of a SMSV state and a displaced SMSV state. For simplicity, let us assume that the quadrature displaced is the same as the squeezed one. Then, the state can be written as
\begin{align}\label{eq:OS_SQ_initialstate}
\rho_{\text{in}}
=
(1-p) \, \ket{0,r}\!\!\bra{0,r}
+ p \, \ket{d,r}\!\!\bra{d,r}\,,
\end{align}
where $\ket{d,r} = D(d) \, \ket{0,r}$, and $D(d)$ is the operator displacing the state by $d$ along the $\hat{x}$ quadrature. For $d$ sufficiently large, the state $\rho_{\text{in}}$ does not have any squeezing that might be exploited for a quantum task (for metrology, for instance). Our aim is to distil the hidden squeezing resource from this mixture.

\begin{figure}[t!]
\centering
\includegraphics[width=0.6\columnwidth]{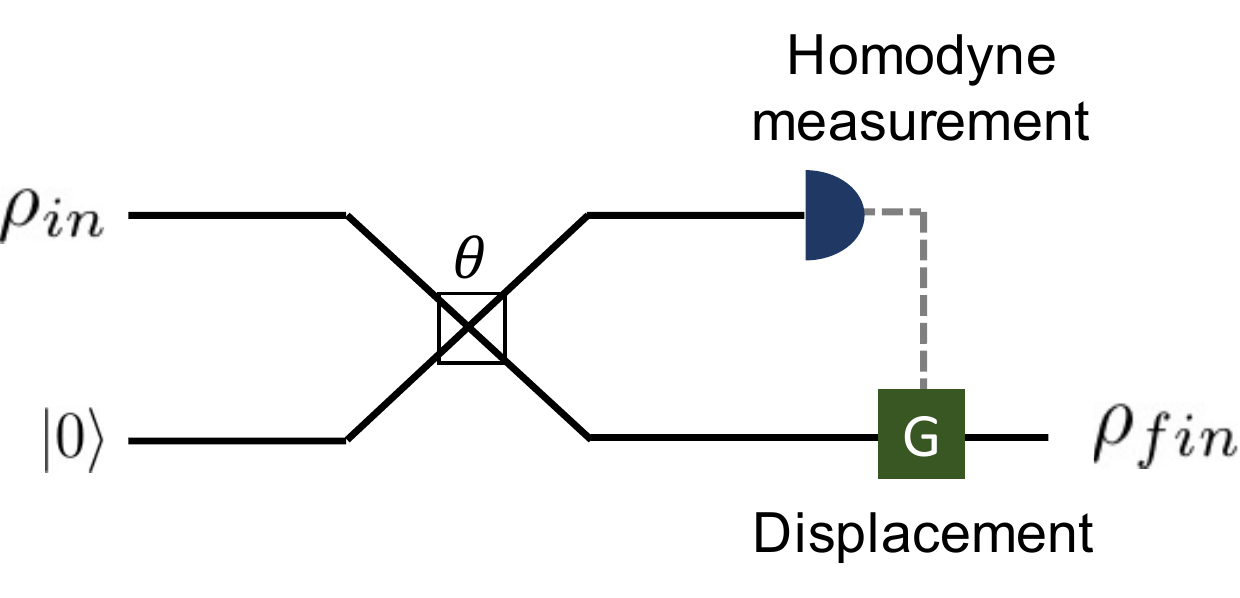}
\includegraphics[width=0.9\columnwidth]{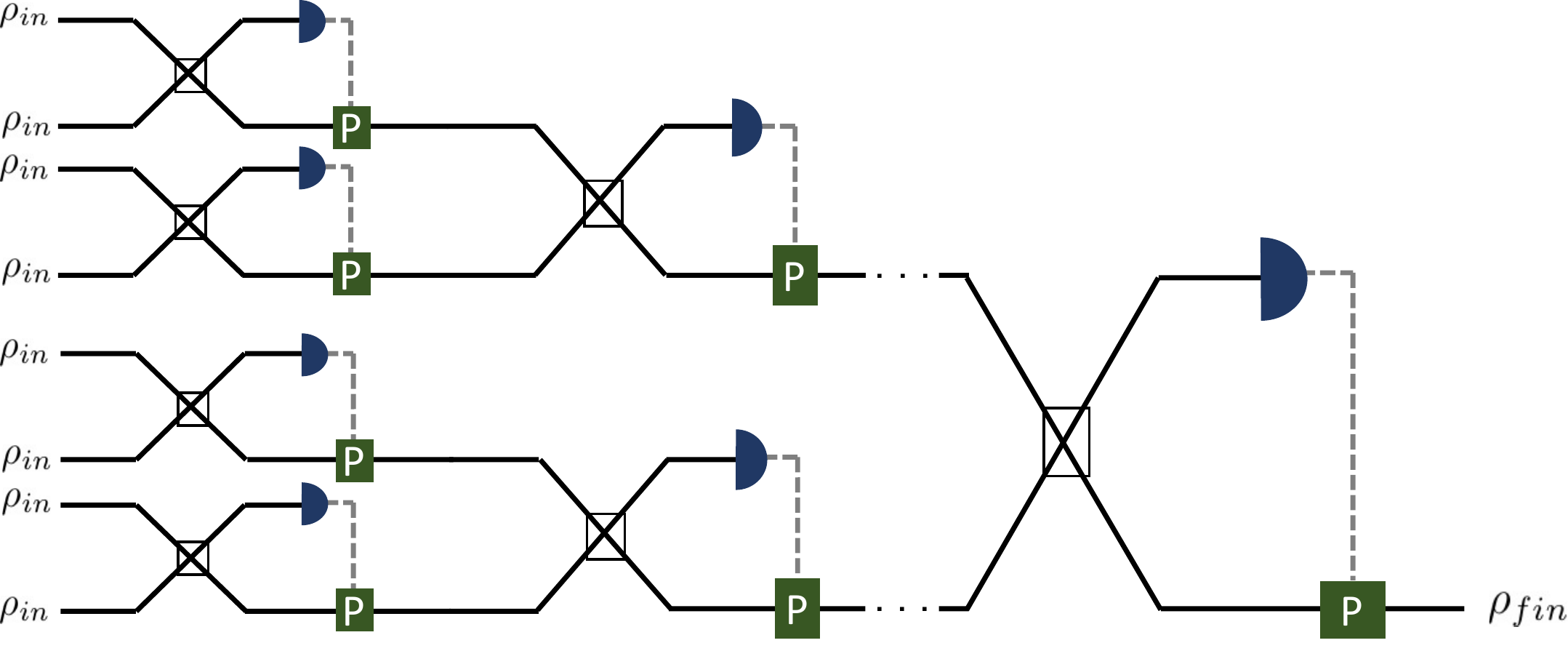}
\caption{{\bf Top}: Schematic of the deterministic one-shot squeezing distillation protocol. The system, described by the state $\rho_{\text{in}}$ in Eq.~\eqref{eq:OS_SQ_initialstate}, is fed into a beam splitter with transmissivity $t=\cos^2\theta$ together with a pointer system in the vacuum state. After the interaction, a homodyne measurement is performed on the pointer system and depending on the outcome the main system is displaced or not. {\bf Bottom}: Schematic of the probabilistic multi-copy squeezing distillation protocol. The pointer system is now replaced by another copy of $\rho_{in}$, providing better performance due to the fact that it is a mixture of squeezed states. The system and pointer interact via a beam splitter, and a homodyne measurement is performed on the pointer. If the outcome of the measurement falls within a given range, the protocol is successful, otherwise it is aborted. The protocol can be iterated multiple times, with the output system of one iteration being the input system of the next one.}
\label{fig:OneShot_SQdist}
\end{figure}

A simple distillation protocol, inspired by the one presented in \cite{heersink2006distillation}, is shown in the upper panel of Fig.~\ref{fig:OneShot_SQdist}. The main idea is to measure the $\hat{x}$ quadrature of the mixture, and correct the displacement noise accordingly. To do so, we first need to correlate the main system with a pointer, that we later measure, initialised in the vacuum state $\ket{0}$. The two systems interact via a beam splitter with large transmissivity $t = \cos^2\theta$. In this way, the pointer gets information on whether the system was displaced or not, while disturbing the state of the system and therefore reducing its squeezing. In particular, the larger the transmissivity of the beam splitter, the less the squeezing of the system is compromised, while at the same time the less is learned about the displacement affecting the system~\cite{hofmann_information_2000}.

After the beam splitter, the $\hat{x}$ quadrature of the pointer is measured via homodyne detection. In particular, the measurement is described by the dichotomic positive-operator valued measure (POVM) $\left\{ \Pi , \id - \Pi \right\}$, where the effect $\Pi$ is defined as
\begin{align}
\label{eq:sq_effect}
\Pi =
\int_{-\infty}^{\Delta} \rmd x \, \ket{x}\!\!\bra{x}\,,
\end{align}
where $\Delta = - \frac{d}{2} \sin \theta$. When the outcome of the POVM refers to $\Pi$, it means that with high probability the main system was displaced along $\hat{x}$, and we can correct it by performing a displacement of $-d \cos \theta$ on the system along the same quadrature. On the other hand, if the outcome of the POVM is associated with the effect $\id - \Pi$, with high probability the system was not displaced and we do not need to act on it.

\begin{figure}[ht!]
\centering
\includegraphics[width=0.9\columnwidth]{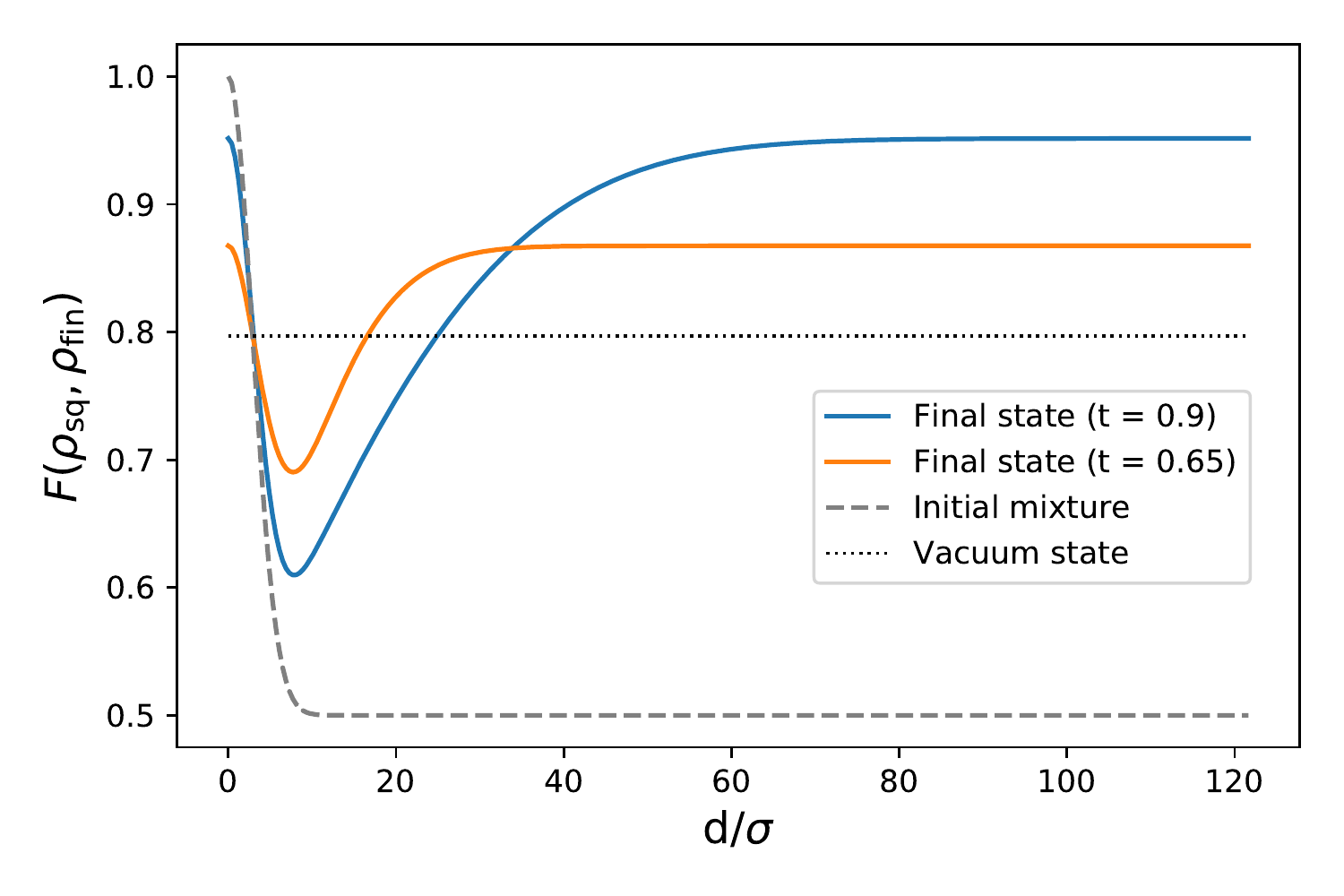}
\includegraphics[width=0.9\columnwidth]{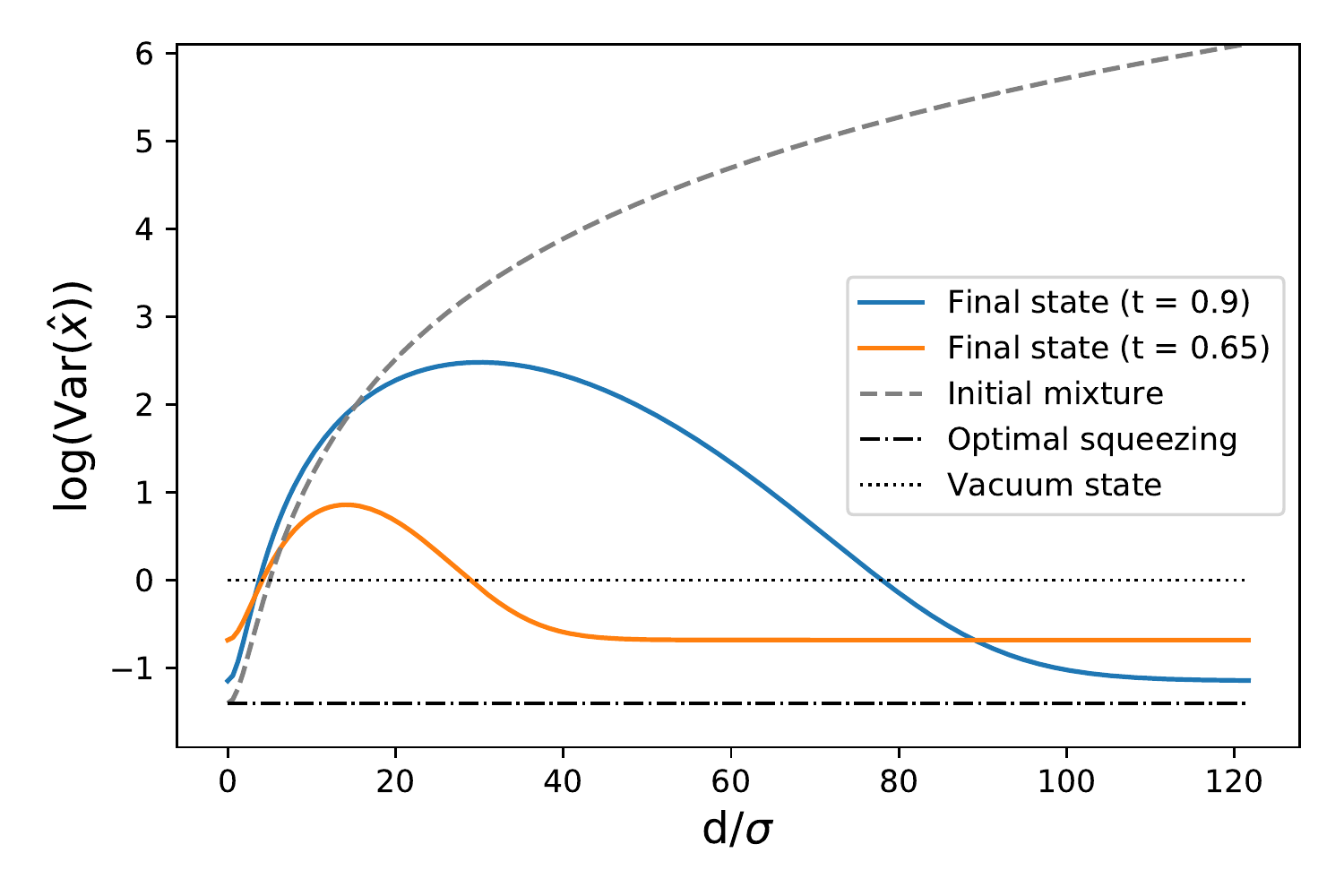}
\caption{Numerical simulation of the deterministic single-shot squeezing protocol, for two different beam splitter settings with transmissivity $t = 90\%$ and $65\%$. {\bf Top}: The fidelity between the desired state $\rho_{\text{sq}}=\ket{0,r}\!\!\bra{0,r}$ and the final state $\rho_{\text{fin}}$ of the protocol as a function of $d/\sigma$, which is the ratio between the amount of displacement $d$ randomly applied to the initial state and its $\hat{x}$-variance $\sigma=e^{-2r}$. While a low beam splitter's transmissivity $t=65\%$ allows for correcting smaller random displacement, it also degrades squeezing more consistently, thus achieving a lower maximum fidelity for high values of $d/\sigma$ than the one achieved by a higher transmissivity (e.g., $t=90\%$). {\bf Bottom}: The variance in the $\hat{x}$-quadrature for the final state of the protocol is shown as a function of $d/\sigma$. The protocol is successful when the final state has a variance lower than the one of the vacuum state. For the simulation we set the squeezing parameter $r=0.7$ and the probability of random displacement to $p=0.5$.}
\label{fig:numerics_sq_single}
\end{figure}

Numerical results, showing the performance of this protocol, are given in Fig.~\ref{fig:numerics_sq_single}. We characterise the performance in terms of the fidelity to the SMSV state $\ket{0,r}$ as a function of the distinguishability between the elements of the mixture. Distinguishability can be quantified by the ratio $d/\sigma$, where $d$ is the displacement along the relevant quadrature while $\sigma = e^{-2r}$ is the variance of $\ket{0,r}$ along the same quadrature. When $d\approx\sigma$, the protocol is not able to distinguish the two elements in Eq.~\eqref{eq:OS_SQ_initialstate}, and thus the outcome has a lower fidelity to the target state $\ket{0,r}$ than the initial one (dashed grey line). When instead $d\gg \sigma$, the protocol is able to distinguish the two states, and the displacement can be effectively corrected; the fidelity is closer to 1 than the initial state as well as the vacuum state (dotted grey line) which is the closest free state to $\ket{0,r}$. Also note that with larger transmissivity it takes longer to reach high fidelity as $d/\sigma$ increases, but higher fidelity is reached in the $d\gg \sigma$ region. This is because less information is extracted by the pointer from the initial state after the beam splitter, making it harder to distinguish the two states but preserving more squeezing.

Additionally, we compute the variance along the displaced quadrature (in this example the $\hat{x}$-quadrature), see the bottom panel in Fig.~\ref{fig:numerics_sq_single}. Variance is a relevant measure for this resource theory since squeezing is often considered as a resource for metrological tasks\,---\,for example to improve the sensitivity of interferometers\,---\,and variance quantifies how useful a given probe can be for these tasks. We find that our protocol can indeed reduce the variance of the initial mixture when $d \gg \sigma$, reaching values well below the ones of the vacuum.\footnote{In this theory, the vacuum is the free state with minimum variance in all quadratures.} However, we notice that the outcome state can never be more squeezed than the maximally-squeezed state in the original mixture (`optimal squeezing' in the graph), in accord with the upper bound of distillable resources derived in Cor.~\ref{cor:NewNGTheorem}. As a side-remark, we notice that variance is not a monotone for the convex resource theory of squeezing when the allowed operations are of the form given in Def.~\ref{def:all_op_cgt}. This is because conditional operations can reduce this measure. However, we show in App.~\ref{app:var_is_monotone} that variance is a meaningful monotone under a slightly different set of allowed operations, which is still natural over the convex hull of Gaussian states.

It is worth pointing out that we do not use the measure $\tkappa$ of Eq.~\eqref{eq:tkappa} in our numerical results, because the initial state and the target state $\ket{0,r}$ have the same value of $\tkappa$. Indeed, this measure is useful to derive general limitations on resource distillation in convex Gaussian resource theories, but it is not as good at reflecting the usefulness of the resource contained in a state. Note that the protocol here described is deterministic and requires only one copy of the initial state.


\paragraph{Multi-copy probabilistic case.} Within the setting of resource theories, having access to multiple copies of the same state often provides some benefit. For example, one can act globally over the copies to reduce quantum fluctuations, and consequently make the theory reversible, thus achieving better distillation or dilution rates~\cite{bennett_concentrating_1996}. Interestingly, this is in general not the case for Gaussian resource theory, as it was shown in \cite{lami2018gaussian}. In the following example, we explore whether the many-copy settings might provide advantages for distillation in the convex Gaussian theory of squeezing. We show that, indeed, with many copies it is possible to get information on the displacement noise affecting the system without compromising its amount of squeezing. Thus, under this point of view it seems that the multi-copy setting has some advantages over the single-shot case.

\begin{figure}[t!]
\includegraphics[width=0.9\columnwidth]{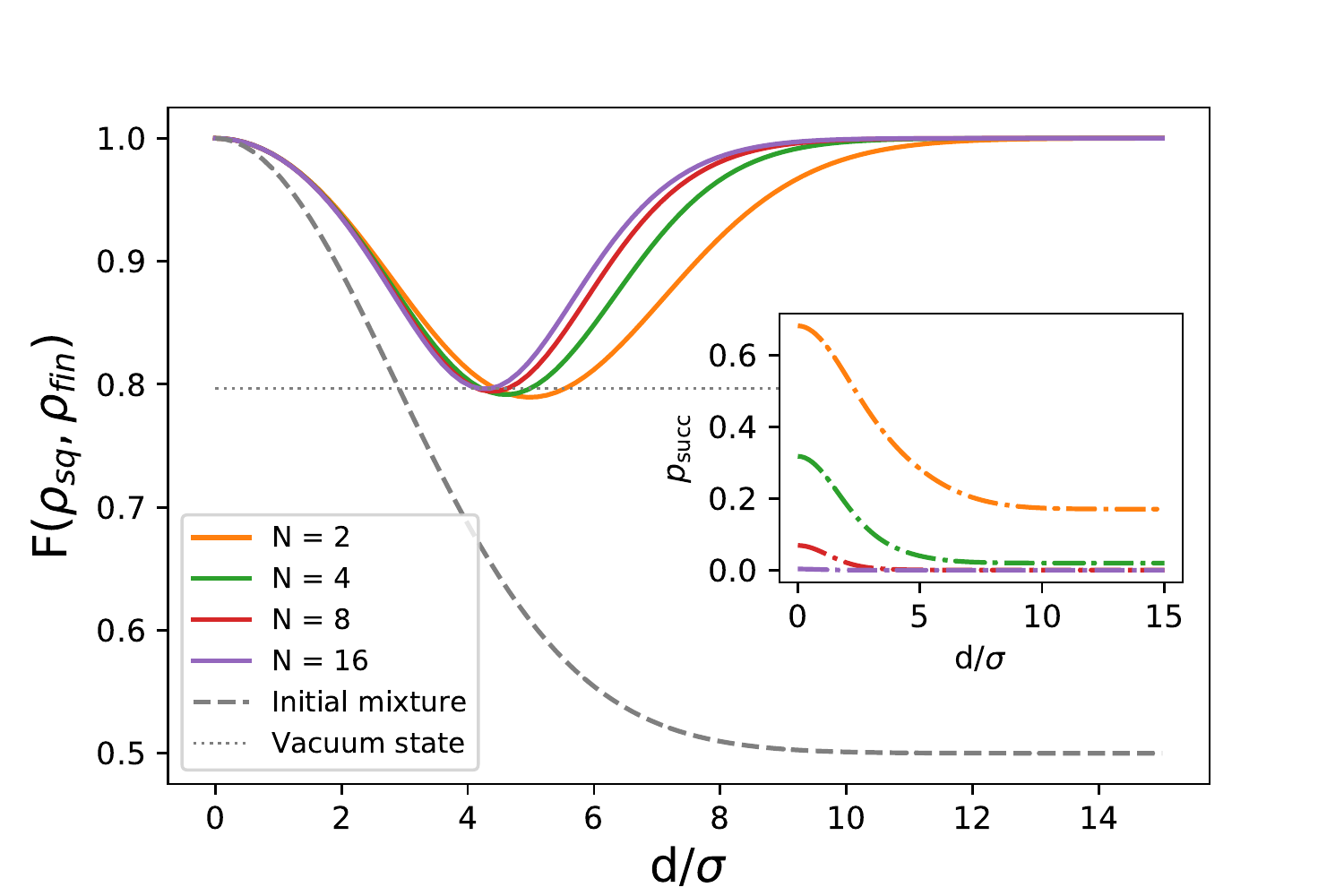}
\includegraphics[width=0.9\columnwidth]{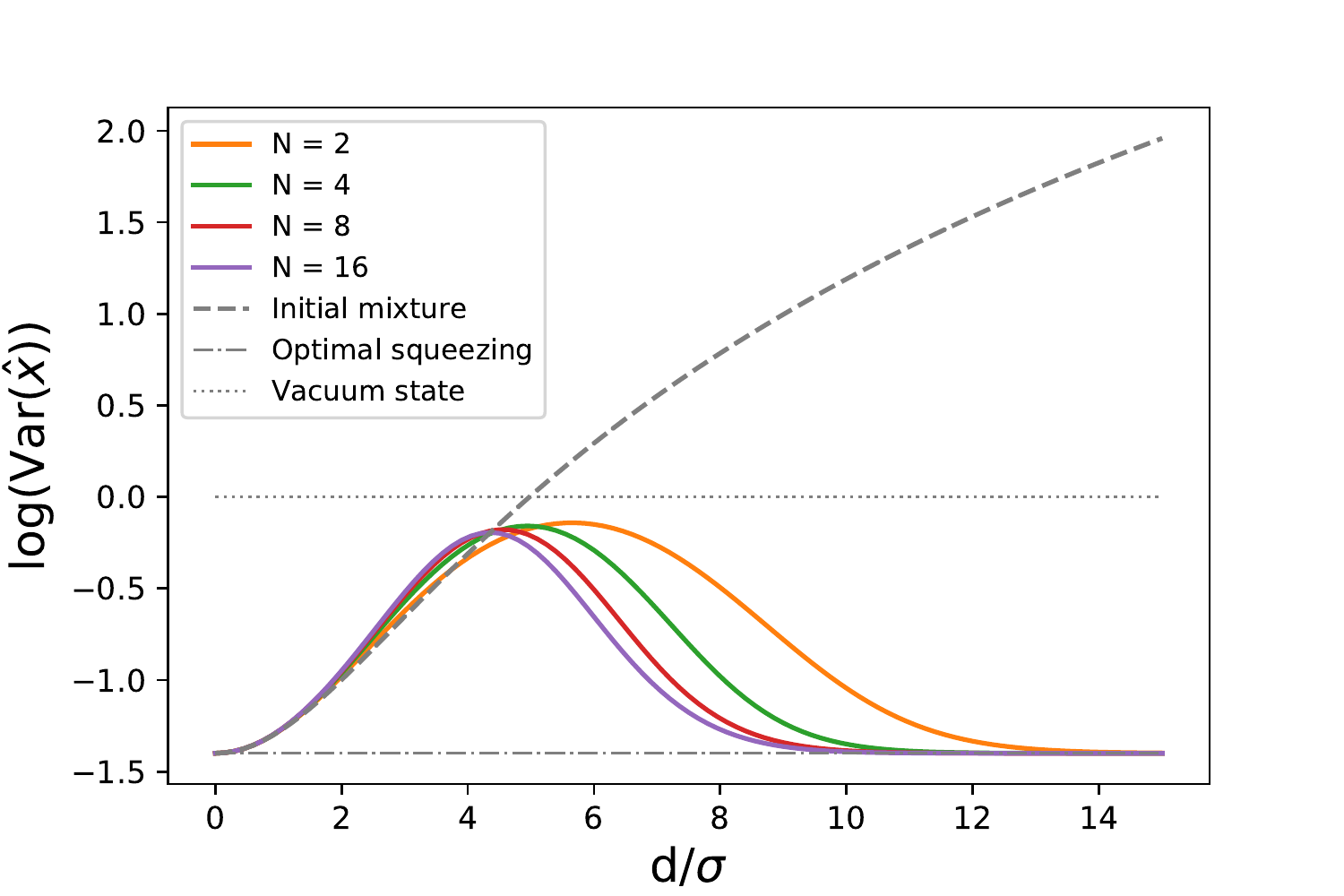}
\caption{Numerical simulation of the probabilistic multi-copy squeezing distillation protocol for different numbers of copies of the initial state. {\bf Top}: The fidelity between the desired state $\rho_{\text{sq}}=\ket{0,r}\!\!\bra{0,r}$ and the final state $\rho_{\text{fin}}$ of the protocol as a function of $d/\sigma$, where $\sigma=e^{-2r}$ is the $\hat{x}$-variance of $\ket{0,r}$, for different numbers of initial copies $N$ used in the protocol. The success probability of the protocol is shown in the inset. In most regions, more copies lead to more successful performance even though the success probability decreases significantly as $N$ increases. However, for smaller $d/\sigma$ more copies does not help to improve the performance. {\bf Bottom}: The variance in the $\hat{x}$-quadrature for the final state as a function of $d/\sigma$. The protocol is successful when the final state has a smaller variance than the one of the vacuum state (dotted grey line). Optimal squeezing (dash-dotted grey line) is the $\hat{x}$-variance of $\ket{0,r}$. For the simulation we set the squeezing parameter $r=0.7$ and the probability of random displacement to $p=0.5$.}
\label{fig:ManyCopy_SQdist}
\end{figure}

The setting, shown in the bottom panel of Fig.~\ref{fig:OneShot_SQdist}, is similar to the one used for one-shot squeezing distillation. However, we are now allowed to use $N$ independent and identically distributed (iid) copies of the initial state $\rho_{\text{in}}$, given in Eq.~\eqref{eq:OS_SQ_initialstate}. In the first step of the protocol, two copies of the system are considered, and fed to a $50:50$ beam splitter (that is, one with transmissivity $t=1/2$). Indeed, since both states are composed by a mixture of the same squeezed state $\ket{0,r}$ (which has been displaced or not), the beam splitter does not degrade the amount of squeezing contained in the two copies. After the beam splitter, the two systems are classically correlated, and the global state is
\begin{align}
\rho_{12}' =\; &(1-p)^2 \, \ket{0,r}\!\!\bra{0,r}_1 \otimes \ket{0,r}\!\!\bra{0,r}_2 \nonumber \\ 
&+ p (1-p) \, \ket{d',r}\!\!\bra{d',r}_1 \otimes \ket{d',r}\!\!\bra{d',r}_2 \nonumber \\ 
&+ p (1-p) \, \ket{-d',r}\!\!\bra{-d',r}_1 \otimes \ket{d',r}\!\!\bra{d',r}_2 \nonumber \\ 
&+
(1-p)^2 \, \ket{0,r}\!\!\bra{0,r}_1 \otimes \ket{2 d',r}\!\!\bra{2 d',r}_2\,,
\end{align}
where $d' = \frac{d}{\sqrt{2}}$. From the above equation, we can see that by measuring the displacement of the second state, we can infer the one of the first state.

The following step consists of a homodyne measurement performed on the second copy (that is playing the role of the pointer in this setting), followed by post-selection. This measurement is composed by a dichotomic POVM $\left\{ \Pi, \id - \Pi \right\}$ analogous to the one described in the previous section. In particular, the effect is equal to the one in Eq.~\eqref{eq:sq_effect}, with the main difference that the cut-off region is now given by $(-\Delta', \Delta']$, where $\Delta' = \exp(-r)$. If the outcome of the measurement is associated with the effect $\Pi$ the protocol is successful, since with high probability the remaining system is described by $\ket{0,r}$. Otherwise, the protocol fails and the remaining system is discarded. The protocol can be iterated over multiple copies, obtaining a final state that approaches the target state $\ket{0,r}$ at the expenses of a decreasing success probability (exponentially small in the number of iterations performed). An interesting open question left to address is whether the displacement noise can be corrected for all values of $d$.

The numerical simulation is shown in Fig.~\ref{fig:ManyCopy_SQdist}. Again, we plot the fidelity for $\ket{0,r}$ as well as the variance of $\hat{x}$-quadrature of the final state as a function of $d/\sigma$ for different numbers of copies $N$. The fidelity graph also shows the success probability in the inset. We observe that the performance is significantly improved compared to the one-shot case shown in Fig.~\ref{fig:numerics_sq_single}: The fidelity is higher than that of the vacuum state (dotted grey line) in most regions of $d/\sigma$; and the final state can reach fidelity 1 for large $d/\sigma$, which was not possible in the one-shot case. This second point means that the outcome state is exactly $\ket{0,r}$, which is the most resourceful state in the initial mixture, and thus proves that the upper bound in Cor.~\ref{cor:NewNGTheorem} can be saturated so that it is a tight bound. However, the success probability decreases significantly as $N$ and $d/\sigma$ increase. Again, no matter what the values of $N$ and $d/\sigma$ are, the final state can never be more squeezed than the optimal squeezing (dash-dotted grey line in the variance graph) as stated in Cor.~\ref{cor:NewNGTheorem}. In most regions larger $N$ results in a better performance, but for small $d/\sigma$ using more copies improves neither the $\hat{x}$-variance nor the fidelity. This is because when the two squeezed states in the initial mixture have a large overlap, the post-selection procedure does not filter out the unwanted parts as effectively, and more iterations just end up with more unwanted elements in the final state.


\subsection{Entanglement distillation}

\begin{figure}[t!]
\centering
\includegraphics[width=0.7\columnwidth]{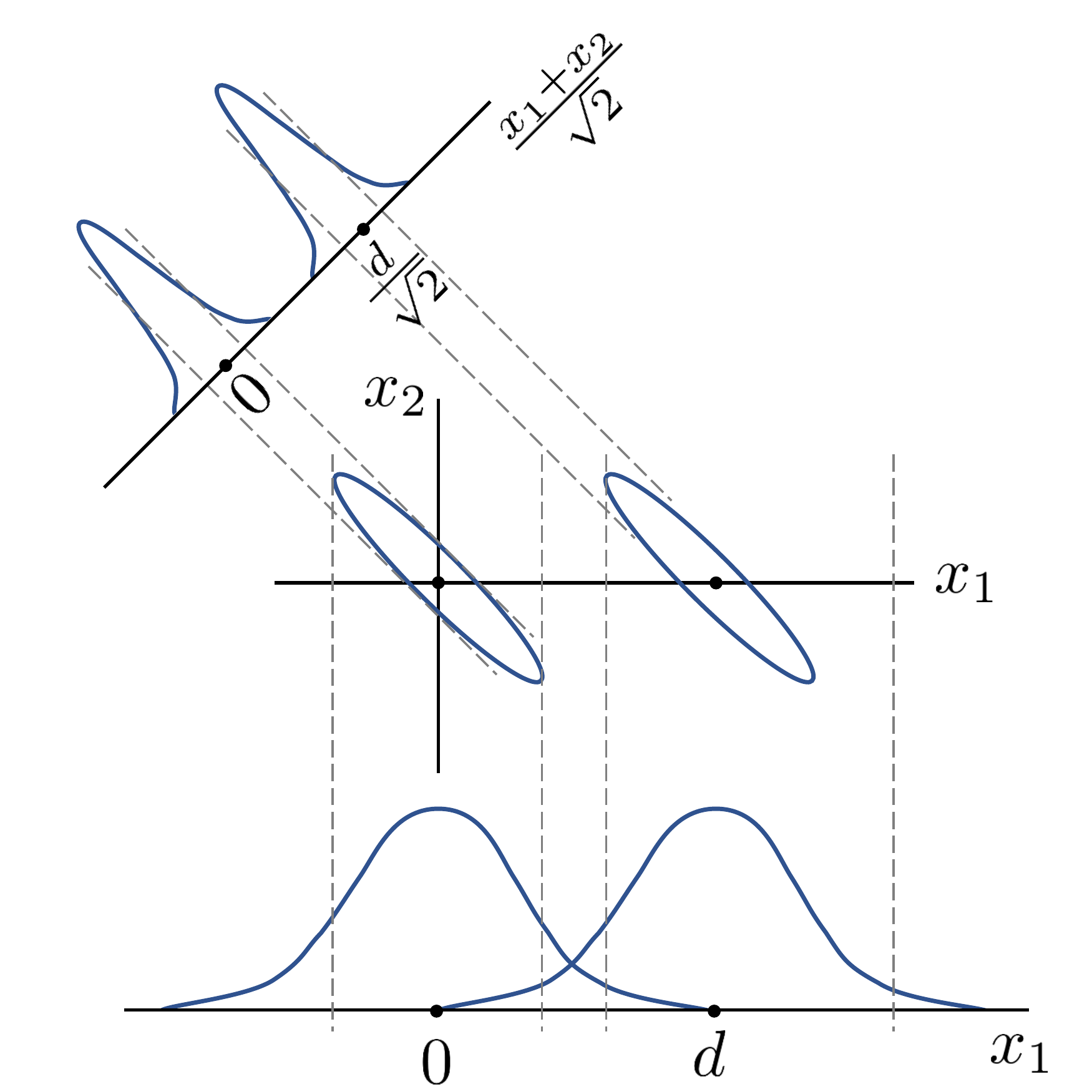}
\caption{The marginal probability distributions obtained from the Wigner function of the state $\rho_{\text{in}}$, given in Eq.~\eqref{eq:MC_ENT_initialstate}. In the $x_1$-$x_2$ diagram, the distribution is represented by two Gaussians squeezed along the $x_1+x_2$ direction, one centred in the origin and the other displaced by $x_1 = d$. If Bob was to measure his local quadrature $x_1$ alone, the two Gaussians would overlap and distinguishing then would be more difficult than measuring the non-local quadrature $x_1+x_2$, where we see that the two distributions have a smaller overlap.}
\label{fig:ent_state}
\end{figure}

We now study entanglement distillation in convex Gaussian resource theories. We show that, when convex mixtures of Gaussian states and conditional Gaussian LOCC operations are allowed, entanglement can indeed be distilled. This is achieved with a protocol similar to the one introduced for the squeezing resource theory and works in the multi-copy probabilistic case.

Consider the following scenario. Alice prepares pure two-mode squeezed states, by applying to the vacuum state the operator $S_2(r)$ introduced in Eq.~\eqref{eq:two_mode_sq}, and send one mode via a noisy channel to Bob. The noisy channel either applies a finite displacement along a given quadrature with probability $p$, or it leaves the mode unchanged. This noise model is a convenient idealisation that produces a non-Gaussian mixture of Gaussian states, and entanglement can be distilled from it. However, the noise model is not very realistic, since for continuous variables we would expect the displacement noise to be distributed according to some continuous distribution, rather than a discrete one. Since the main goal of this section is to provide an example to entanglement distillation using convex Gaussian resources, we do not worry too much about the physicality of the noise model, but it remains an interesting question whether entanglement could be distilled under a more realistic model.

The state shared by Alice and Bob after the noisy channel is
\begin{align}\label{eq:MC_ENT_initialstate}
\rho_{\text{in}} = (1-p) \ket{\tmsv_r}\!\!\bra{\tmsv_r}_{12} + p \, D_1(d) \ket{\tmsv_r}\!\!\bra{\tmsv_r}_{12} D_1(d)^{\dagger}\,,\quad
\end{align}
where the displacement $D_1(d)$ acts over the (local) quadrature $\hat{x}_1$, while the TMSV state $\ket{\tmsv_r} = S_2(r) \ket{0,0}$ is squeezed along the (non-local) quadratures $\hat{x}_+ = (\hat{x}_1+\hat{x}_2)/\sqrt{2}$ and $\hat{p}_- = (\hat{p}_1-\hat{p}_2)/\sqrt{2}$ , with squeezing parameter $r$. A convenient way of visualising the above state is in the $x_1$-$x_2$ space, shown in Fig.~\ref{fig:ent_state}. In particular, it is clear that the displacement in the $\hat{x}_1$-quadrature can be better detected by measuring the non-local quadrature $\hat{x}_+$ rather than $\hat{x}_1$ itself. This observation informs the choice of protocol we introduce, which is analogous to the one used for squeezing.

The protocol, shown in Fig.~\ref{fig:ManyCopy_ENTdist_DIA}, is the following. Alice and Bob use two copies of the state $\rho_{\text{in}}$ at the time, and feed their local modes into a $50:50$ beam splitter. This operation creates classical correlations between the two systems, and produces the following state
\begin{widetext}
\begin{align}
\rho' =\;& (1-p)^2 \ket{\tmsv_r}\!\!\bra{\tmsv_r}_{12} \otimes \ket{\tmsv_r}\!\!\bra{\tmsv_r}_{34} + p (1-p) \, D_1(d') \ket{\tmsv_r}\!\!\bra{\tmsv_r}_{12} D_1(d')^{\dagger} \otimes D_3(d') \ket{\tmsv_r}\!\!\bra{\tmsv_r}_{34} D_3(d')^{\dagger} \nonumber \\ 
&+ p (1-p) \, D_1(d') \ket{\tmsv_r}\!\!\bra{\tmsv_r}_{12} D_1(d')^{\dagger} \otimes D_3(-d') \ket{\tmsv_r}\!\!\bra{\tmsv_r}_{34} D_3(-d')^{\dagger} + p^2 \, D_1(2 d') \ket{\tmsv_r}\!\!\bra{\tmsv_r}_{12} D_1(2 d')^{\dagger} \otimes \ket{\tmsv_r}\!\!\bra{\tmsv_r}_{34}\,,
\end{align}
\end{widetext}
where $d' = \frac{d}{\sqrt{2}}$, and modes $1$ \& $3$ are held by Bob while modes $2$ \& $4$ by Alice. Alice and Bob now need to identify, by measuring the system on the modes $1$ and $2$, whether this system was displaced or not. Since the displacement was performed along the $\hat{x}_1$ quadrature, one possible way of detecting it is for Bob to measure this quadrature locally. However, this might not be the optimal strategy, since the state is squeezed along the quadrature $\hat{x}_+$, and therefore it looks thermal (i.e., it has a broad variance) on $\hat{x_1}$, see Fig.~\ref{fig:ent_state}. Instead, Alice and Bob can locally measure $\hat{x}_2$ and $\hat{x}_1$ respectively, communicate their outcome classically, and process them. In this way, they are able to measure the non-local quadrature $\hat{x}_+$. This measurement can give more information on the displacement of the state, since the system is squeezed along this quadrature. Before introducing the POVM that Alice and Bob perform, it is worth noting that a system displaced by $\delta$ along $\hat{x}_1$ results displaced by $\frac{\delta}{\sqrt{2}}$ along $\hat{x}_+$. If the squeezing is significant, however, measuring the non-local quadrature still provides an advantage to measuring $\hat{x}_1$. Alice and Bob can then measure a dichotomic POVM $\left\{ \Pi, \id - \Pi \right\}$ on the $1$ \& $2$ modes, given by the following effect
\begin{align}\label{eq:nonlocal_effect}
\Pi =  \int \rmd x_1 \rmd x_2 \,
\chi_{(-\Delta',\Delta']}\left(\frac{x_1+x_2}{\sqrt{2}}\right)
\ket{x_1}\!\!\bra{x_1}_1 \otimes \ket{x_2}\!\!\bra{x_2}_2
\end{align}
where the indicator function $\chi_{(-\Delta',\Delta']}\left(\frac{x_1+x_2}{\sqrt{2}}\right)$ is equal to $1$ for $\Delta' <(x_1+x_2)/\sqrt{2} \leq \Delta'$, and $0$ otherwise, and $\Delta'=\exp(-r)$. It can be computed by Alice and Bob after they share their outcomes.

After measuring the above POVM, Alice and Bob can decide whether to continue the protocol or abort it. If the outcome $x_+ \in \left(-\Delta',\Delta'\right]$, they keep the system in the modes $3$ \& $4$, otherwise they discard the system and start again. Even in the case of success, the final state of the system is composed by a mixture, where the state $\ket{\tmsv_r}$ occurs with high probability, while its displaced versions (by $\pm d'$ along $\hat{x}_1$) have a much lower weight. However, Alice and Bob can iterate the protocol described above, by suitably changing the interval of acceptance in Eq.~\eqref{eq:nonlocal_effect} at each round, to increase the weight of the desired state $\ket{\tmsv_r}$. Clearly, since this protocol is non-deterministic, the probability of success quickly decreases with the number of iterations. It should be stressed that Alice and Bob could keep the protocol fully deterministic, by measuring more intervals along $\hat{x}_+$ and displacing the state depending on the outcome. While this is possibly as efficient as the presented protocol, it is less straightforward since at each iteration the number of states in the mixture increases and all of them need to be displaced.

\begin{figure}[!t]
\includegraphics[width=0.9\columnwidth]{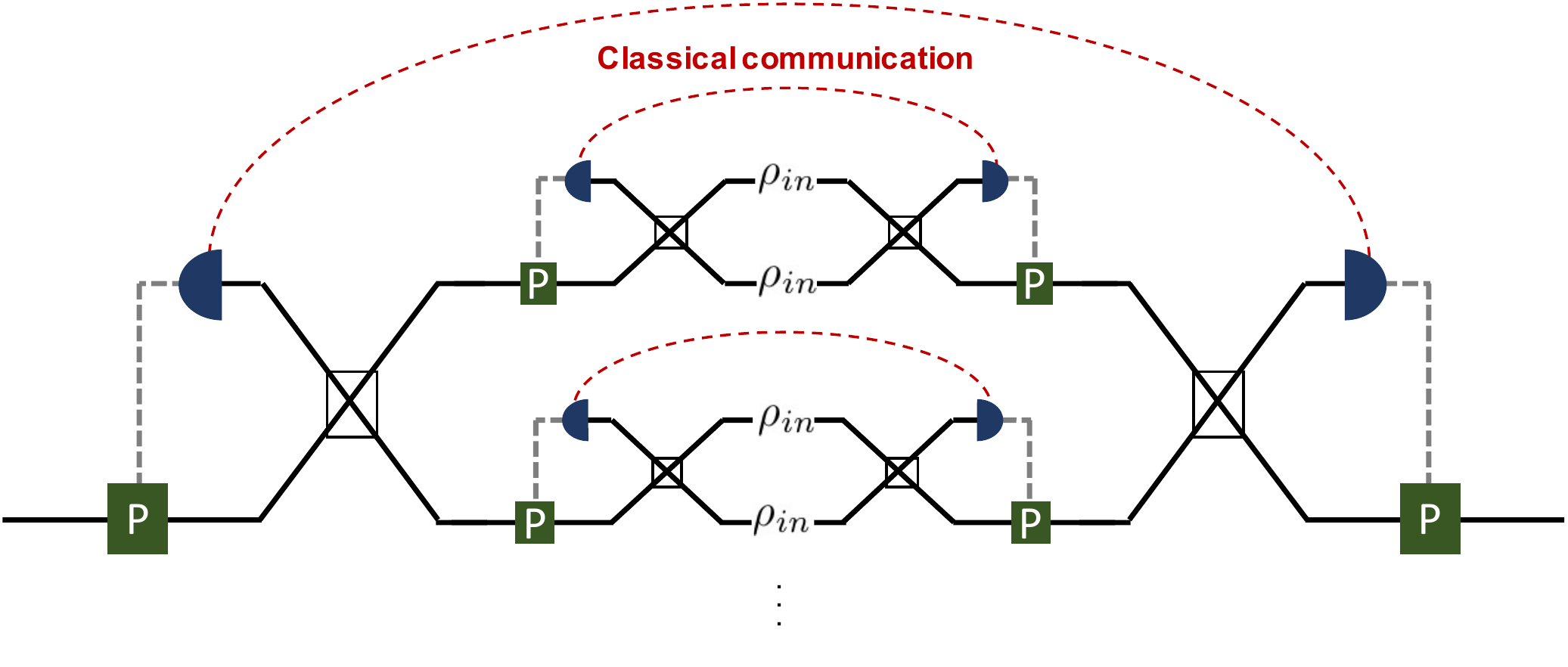}
\caption{Schematic of the probabilistic multi-copy entanglement distillation protocol. Alice and Bob share two copies of the initial state $\rho_{\text{in}}$ in Eq.~\eqref{eq:MC_ENT_initialstate} and feed their local modes into 50:50 beam splitters. Then, via classical communication they perform joint homodyne measurement on the quadrature $\hat{x}_+=(\hat{x}_1+\hat{x}_2)/\sqrt{2}$. If the outcome of the measurement is in the desired range, the protocol is successful, otherwise it is aborted. The protocol can be iterated multiple times with the output state of one iteration being the input state of the next one.}
\label{fig:ManyCopy_ENTdist_DIA}
\end{figure}

\begin{figure}[!t]
\includegraphics[width=\columnwidth]{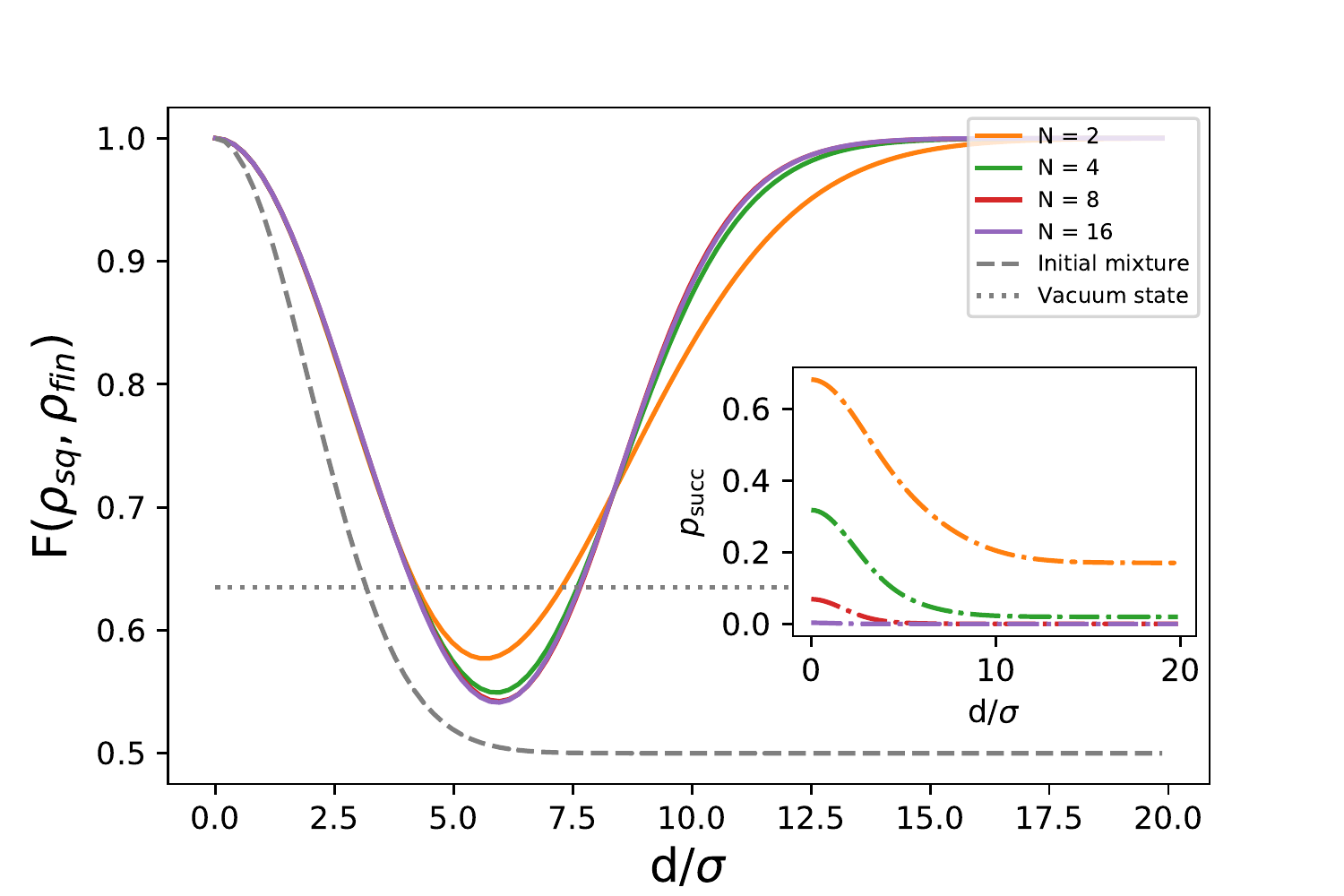}
\caption{Numerical simulation of the multi-copy entanglement distillation protocol for different numbers of copies of the initial state. The graph shows the fidelity between the desired state $\rho_{\text{sq}}=\ket{\psi_r}\!\!\bra{\psi_r}$ and the final state $\rho_{\text{fin}}$ of the protocol as a function of $d/\sigma$. The success probability of the protocol is shown in the inset. The protocol is successful when the final state has a larger fidelity than the initial state as well as the vacuum state (the dotted grey line), which is the closest free state to $\ket{\psi_r}$ \cite{marian2003bures}. The results are similar to the multi-copy squeezing distillation case but generally worse. Using more copies leads to a better performance at the expense of the small success probability for larger $d/\sigma$, but not for small $d/\sigma$. For the simulation, we set the squeezing parameter $r=0.7$, and the probability of random displacement to $p=0.5$.}
\label{fig:ManyCopy_ENTdist}
\end{figure}

The numerical simulation of the protocol, providing the fidelity to $\ket{\psi_r}$ of the output state as well as the success probability of the protocol as a function of $d/\sigma$ for different numbers of copies $N$, is shown in Fig.~\ref{fig:ManyCopy_ENTdist}. The results look similar to the multi-copy squeezing distillation case in Fig.~\ref{fig:ManyCopy_SQdist} as the idea is same, but the performance of the entanglement case is generally worse than the squeezing case. However, still we can achieve fidelity 1 for large $d/\sigma$. As similar to the squeezing case, for small $d/\sigma$ using more copies does not improve the performance since the filtering process via post-selection does not work as effectively for small $d/\sigma$ due to the large overlap between elements in the initial mixture. It is also worth noting that again the upper bound in Cor.~\ref{cor:NewNGTheorem} can be achieved for large $d/\sigma$ (fidelity 1). The protocol described in this section is heavily inspired by the one known as \emph{Gaussification}, introduced in \cite{browne2003driving,eisert2004distillation}, allowing for the distillation of Gaussian entanglement from initial non-Gaussian states. The main difference between the two protocols are due to the fact that we consider a specific class of non-Gaussian states, namely convex mixtures of Gaussian states. This can be leveraged to obtain a more efficient protocol in terms of success probability (indeed, as we stressed before, our protocol can even be made deterministic).


\section{Discussion}\label{sec:discussion}

\paragraph{Summary.} We explored whether classical randomness and conditional operations can help overcoming known limitations to resource distillation in Gaussian resource theories. We introduced the notion of convex Gaussian resource theories, which is motivated by the easy accessibility of convex mixtures of Gaussian resources and their usefulness demonstrated in \cite{takagi2018convex} in the context of distillation of non-Gaussianity. We defined an appropriate resource monotone using a convex roof extension and fully characterised the new limitation on resource distillation in convex Gaussian resource theories. The found limitation does not rule out all resource distillation, but restrict the amount of distillable resources in convex Gaussian resource theories. This seems to indicate that the non-convexity of Gaussian resource theories is not the ultimate obstacle of resource distillation in the Gaussian platform. Nevertheless, we found several distillation protocols for squeezing and entanglement which show that limited resource distillation is possible in convex Gaussian resource theories.

\paragraph{Outlook.} The convex roof extension of the Gaussian resource measure introduced in \cite{mandilara2012quantum} to the convex hull of Gaussian states seems quite general. Historically, the convex roof extension has been used to extend resource measures from pure states to mixed states~\cite{horodecki_quantum_2009}. Even though Gaussian states include mixed states, the relation between Gaussian states and convex mixtures of Gaussian states is not very different from the one between pure states and mixed states. Indeed, any monotone that is well-defined over the set of Gaussian states can be extended to the convex hull of this set by taking the convex roof extension.

One possible application of convex Gaussian resource theories is finding a purification or error correction protocol for non-Gaussian noise. In Sec.~\ref{sec:examples} we demonstrated that convex mixtures of Gaussian states and operations can be employed to correct probabilistic discrete displacement noises. However, as we pointed out before, these noises are not physically realistic for continuous-variable systems, and thus it would be interesting to identify a realistic noise model in which convex Gaussian resources can be useful. An interesting example to look at is non-Gaussian continuous noise. For example, in \cite{franzen2006experimental} the authors studied purification of squeezed states under Gaussian phase fluctuation using Gaussian operations, which fits our setting.

Another open question is whether a similar limitation on resource distillation also holds for states which are closely related to Gaussian states. We mention two specific examples: states with a positive Wigner function and stabilizer states. It is well-known that positive Wigner function states are closely related to Gaussian states. In the pure state case the two sets exactly coincide \cite{hudson1974wigner}. Also, stabilizer states and Clifford operations are commonly considered as the discrete-variable counterpart of Gaussian states and operations \cite{gross2006hudson}. As each set is related to Gaussian states in different ways, this might shine some light on the question addressed in our work: What is the fundamental obstacle of resource distillation in Gaussian formalism and how can we overcome it? One can follow the same procedure with these sets of states, namely defining an appropriate resource theory with new sets of states and operations, and investigate resource distillation in the theory. However, for both theories defining an appropriate resource monotone is the major challenge.


\begin{acknowledgments}
We thank Alessio Serafini, David Jennings, Dan Browne, and Mark Wilde for discussions. H.~H.~J. is supported through a studentship in the Centre for Doctoral Training on Controlled Quantum Dynamics at Imperial College London funded by the EPSRC (EP/L016524/1). C.~S. is funded by EPSRC.
\end{acknowledgments}


\bibliography{apssamp}


\begin{appendix}

\section{Gaussian states and operations}\label{app:Gaussian}

A continuous variable (CV) quantum system has an infinite-dimensional Hilbert space described by canonical operators with continuous eigenspectra obeying the canonical commutation relations \cite{weedbrook2012gaussian}. The most familiar examples are $n$-bosonic mode systems which describe radiation modes of an electromagnetic field. For a given set of $n$ degrees of freedom, the corresponding CV quantum system is associated with a tensor product Hilbert space $\hil^{\otimes n} = \bigotimes_{i=1}^n\hil_i$ and the Hermitian canonical operators $\hat{x}_i$ and $\hat{p}_i$ satisfying the canonical commutation relations
\begin{align}
    [\hat{x}_j,\hat{p}_k] = 2i \, \delta_{j,k} \quad \text{ for } j,k=1, ...\,,
\end{align}
where we set $\hbar=2$ for simplicity. When we define the vector of canonical operators $\hat{\mathbf{x}} = \left(\hat{x}_1,\hat{p}_1, ... ,\hat{x}_n,\hat{p}_n\right)^T$, the canonical commutation relations can be written in a simple form as
\begin{align}
    \left[\hat{x}_j,\hat{x}_k\right] = 2i \, \Omega_{jk}\,,
\end{align}
where $\Omega$ is the \emph{symplectic} matrix
\begin{align}
\Omega = \bigoplus_{i=1}^n \Omega_1
, \quad
\Omega_1=\begin{pmatrix}0&1\\-1&0\end{pmatrix}\,.
\end{align}

The canonical operators have continuous eigenspectra $x\in\mathbb{R}$ and $p\in\mathbb{R}$ and span a real symplectic space called the \emph{phase space}. Any quantum state described by a density operator $\rho$ has another information-complete representation in terms of a quasi-probability distribution in the phase space\,---\,the \emph{phase-space representation}. The most common quasi-probability distribution used in quantum optics is the Wigner function $W(\mathbf{x})$, which is defined as a Fourier transform of the characteristic function of the displacement operator
\begin{align}
    W(\mathbf{x}) = \int_{\mathbb{R}^{2n}} \frac{d^{2n}\mathbf{r}}{(2\pi)^{2n}} \exp\left(-i\mathbf{x}^T\Omega\mathbf{r}\right)\chi(\mathbf{r})
\end{align}
for $\mathbf{x}\in\mathbb{R}^{2n}$, where $\chi(\mathbf{r}) = \tr\left[\rho D(\mathbf{r})\right]$ is the characteristic function and $D(\mathbf{r}) = \exp\left(i\hat{\mathbf{x}}^T\Omega\mathbf{r}\right)$ is the displacement operator.

Gaussian states, $\mathcal{G}$, are those states with a Gaussian quasi-probability distribution. Since Gaussian distributions are fully described by their mean and variance, Gaussian states can be represented solely in terms of their first and second statistical moments of $\hat{\mathbf{x}}$. The first moment vector of a state $\rho$ is defined as
\begin{align}
    \bar{\mathbf{x}} := \expval{\hat{\mathbf{x}}} = \left( \expval{\hat{x}_1}, \expval{\hat{p}_1},...,\expval{\hat{x}_n},\expval{\hat{p}_n}\right)^T
\end{align}
and the second moment matrix (covariance matrix) $V$ is defined as
\begin{align}
    V_{ij} := \frac{1}{2}\expval{\left\{\Delta\hat{x}_i,\Delta\hat{x}_j\right\}} = \frac{1}{2}\expval{\hat{x}_i\hat{x}_j + \hat{x}_i\hat{x}_j} - \expval{\hat{x}_i}\expval{\hat{x}_j}\,.
\end{align}
The Heisenberg uncertainty principle impose the following constraints on the covariance matrix
\begin{align}\label{eq:unc_rel}
V + \frac{i}{2} \Omega \geq 0\,.
\end{align}
There is a one-to-one mapping between the set of Gaussian states and the set of pairs of first moment vectors and covariance matrices $\left[\bar{\mathbf{x}},V\right]$, and therefore we can denote any Gaussian state as $\rho^G[\bar{\mathbf{x}},V]$. When the first moment is not important to the context, we use the notation $\rho^G[V]$. This compact description of Gaussian states is the reason why Gaussian states are mathematically easier to work with than general infinite-dimensional CV states. The Wigner function of a Gaussian state $\rho^G[\bar{\mathbf{x}},V]$ can be written as
\begin{align}
    W(\mathbf{x}) = \frac{\exp\left[-\left(\mathbf{x}-\bar{\mathbf{x}}\right)^TV^{-1}\left(\mathbf{x}-\bar{\mathbf{x}}\right)\right]}{\left(\pi/2\right)^N\sqrt{\det V}}\,.
\end{align}
Gaussian operations, $\mathbb{G}$, are quantum operations which preserve Gaussianity of Gaussian states. They correspond to operations realised by Hamiltonians which are quadratic in the canonical operators \cite{serafini2017quantum}.

Gaussian states are easy to produce and control in the laboratory. Familiar examples of Gaussian states are the vacuum state, coherent states, thermal states, and squeezed states. Since in the main text we largely make use of squeezed states, we briefly recall their definitions and properties in this appendix. Squeezed states saturate the uncertainty relations in Eq.~\eqref{eq:unc_rel} but, unlike coherent states, one can find a pair of quadratures for which one of the variances is minimised while the other one is maximised. This feature makes squeezed states resourceful in metrology, and more in general in those tasks requiring high sensitivity in a given quadrature.

For a single mode, the \emph{squeezed coherent state} is defined as 
\begin{align}\label{eq:SqCS}
\ket{\alpha,r} = D(\alpha) S(r) \ket{0}\,,
\end{align}
where $\ket{0}$ is the vacuum state, $D(\alpha)$ is the displacement operator, and
\begin{align}
    S(r) = \exp\left\{ \frac{r}{2} \left(\hat{a}^2-\left(\hat{a}^{\dagger}\right)^2\right) \right\}
\end{align}
is the squeezing operator, with $\hat{a} = \frac{1}{2}\left(\hat{x}+i\hat{p}\right)$ the bosonic field operator. The covariance matrix of this state is
\begin{align}
V_{\smsv}(r)
=
\begin{pmatrix}
e^{-2r} & 0 \\
0 & e^{2r}
\end{pmatrix}\,,
\end{align}
where it is easy to see that the $\hat{x}$ quadrature has been squeezed at the expense of the $\hat{p}$ quadrature. Another relevant squeezed state is the one for two modes, since this state contains some entanglement, and is the Gaussian equivalent of a Bell state for an amount of squeezing tending to infinity. The two-mode squeezed vacuum (TMSV) state is obtained from the vacuum $\ket{0}$ by applying the operator
\begin{align}\label{eq:two_mode_sq}
S_2(r)
=
\exp\left\{-\frac{r}{2} \left(\hat{a}_1\hat{a}_2 - \hat{a}_1^{\dagger}\hat{a}_2^{\dagger}\right) \right\}\,.
\end{align}
This state is squeezed in the $\hat{x}_1+\hat{x}_2$ and $\hat{p}_1-\hat{p}_2$ quadratures at the expenses of the large variances in $\hat{x}_1-\hat{x}_2$ and $\hat{p}_1+\hat{p}_2$.


\section{Monotones for squeezing}\label{app:var_is_monotone}

In the main text we use, together with the fidelity to the target state, the variance with respect to a specific quadrature as a measure for squeezing. In this appendix, we aim to justify its use over convex mixtures of Gaussian states. Let us first introduce the measure
\begin{align}\label{eq:variance_measure}
M_{\text{var}}(\rho)
=
\min_{\hat{Q}}
\text{Var} \left[ \hat{Q} \right]_{\rho}\,,
\end{align}
where the minimization is performed over all quadrature operators $\hat{Q}$, and the variance is defined as
\begin{align}
\text{Var} \left[ \hat{Q} \right]_{\rho}
=
\langle \hat{Q}^2 \rangle_{\rho} - \langle \hat{Q} \rangle_{\rho}^2\,,
\end{align}
where $\langle \hat{A} \rangle_{\rho} = \Tr \left[ \hat{A} \, \rho \right]$. It is worth noting that the above measure coincide with the minimum eigenvalue of the covariance matrix of the state $\rho$, a squeezing measure for pure Gaussian states~\cite{kraus2003entanglement}. The above measure is monotone under the allowed operations of squeezing theory $\mathbb{O}_{\text{sq}}$, which are composed by the following fundamental operations:
\begin{enumerate}[(i)]
\item Appending modes in the vacuum state
\item Performing passive operations (beam splitter, phase shift, and displacements)
\item Trace out a subset of the system's modes.
\end{enumerate}
When the state space is extended to the set of all Gaussian states, the measure in Eq.~\eqref{eq:variance_measure} can be modified as
\begin{align}\label{eq:variance_measure_mixed}
\overline{M}_{\text{var}}(\rho)
=
\min \left\{ 1, M_{\text{var}}(\rho) \right\}\,.
\end{align}
Notice that we have to modify the measure since $M_{\text{var}}$ is not monotonic over states whose covariance matrix has all eigenvalues higher than $1$. This happens for instance in the case of thermal states). Indeed, one could always decrease the measure by replacing these states with the vacuum. To avoid this problem, the minimization in Eq.~\eqref{eq:variance_measure_mixed} has been added. Finally, it is worth noting that in the main text we do not explicitly consider $\overline{M}_{\text{var}}$, but rather we focus on the relevant quadrature for the problem.

From the examples in Sec.~\ref{sec:examples} of the main text, it is clear that $\overline{M}_{\text{var}}$ is not a monotone under the allowed operations of Def.~\ref{def:all_op_cgt}. However, we would like to argue that this measure is still a meaningful one over the set of convex mixtures of Gaussian states. To do so we take a different notion of allowed operations, that are still relevant for the state space we are considering, but do not allow for conditioned operations.\footnote{These are the ones allowing for a reduction of the variance in our examples in Sec.~\ref{sec:examples}.} The operations we consider here are mixtures of allowed operations for squeezing theory. The most general map in this set can be written as
\begin{align}\label{eq:all_op_sq_convex}
\Gamma(\rho)
=
\int \rmd \lambda \, p(\lambda) \, \Phi_{\lambda}(\rho)\,,
\end{align}
where $p(\lambda)$ is a probability distribution, and $\Phi_{\lambda} \in \mathbb{O}_{\text{sq}}$.

Let us first show that the measure $\overline{M}_{\text{var}}$ is monotone under $\mathbb{O}_{\text{sq}}$ alone, when the state space is given by mixtures of Gaussian states. If the mixture of Gaussian states, let us call it $\rho_{\text{mix}}$, is such that $M_{\text{var}}(\rho_{\text{mix}}) < 1$, then it is easy to see that neither {\it (i)} appending ancillary systems in the vacuum state, nor {\it (ii)} performing passive operations modify this value. Indeed, passive operations are represented by orthogonal matrices acting over the covariance matrix by congruence (irrespectively of whether the state is Gaussian or not), thus preserving the eigenvalues of the matrix. Furthermore, {\it (iii)} partial tracing the system corresponds to selecting a principal sub-matrix of the covariance matrix and discarding the rest. It is known~\cite{horn_matrix_1985} that the smallest eigenvalue of the sub-matrix cannot be lower than the original matrix. Thus, when $M_{\text{var}}(\rho_{\text{mix}}) < 1$, we have that
\begin{align}\label{eq:monotonic_var}
M_{\text{var}}\left(\Phi(\rho_{\text{mix}}) \right)
\geq
M_{\text{var}}\left(\rho_{\text{mix}} \right)\,,
\end{align}
where $\Phi \in \mathbb{O}_{\text{sq}}$. When $M_{\text{var}}(\rho_{\text{mix}}) \geq 1$, instead, it is easy to see that only operation {\it (i)} can decrease the variance, since the vacuum's covariance matrix is $\id$. However, the variance cannot be reduced further than the unit, so that $\bar{M}_{\text{var}}$ is monotonic under the allowed operations $\mathbb{O}_{\text{sq}}$ over mixtures of Gaussian states.

Monotonicity of this measure under maps of the form given in Eq.~\eqref{eq:all_op_sq_convex} follows from the concavity of the variance. When the output state of the channel has a minimum variance that is smaller than unit, we have that
\begin{align}
\bar{M}_{\text{var}} \left( \Gamma ( \rho_{\text{mix}} ) \right)
&=
\text{Var} \left[ \bar{Q} \right]_{\Gamma ( \rho_{\text{mix}} )} \\
&\geq
\int \rmd \lambda \, p(\lambda) \, \text{Var} \left[ \bar{Q} \right]_{\Phi_{\lambda}( \rho_{\text{mix}} )} \\
&\geq
\int \rmd \lambda \, p(\lambda) \, \bar{M}_{\text{var}} \left( \Phi_{\lambda}( \rho_{\text{mix}} ) \right) \\
&\geq
\int \rmd \lambda \, p(\lambda) \, \bar{M}_{\text{var}} \left( \rho_{\text{mix}} \right)
=
\bar{M}_{\text{var}} \left( \rho_{\text{mix}} \right)\,,
\end{align}
where $\bar{Q}$ is the quadrature operator minimising the variance of $\Gamma ( \rho_{\text{mix}} )$, and the first inequality follows from the concavity of the variance and by the definition of $\Gamma$. The second inequality follows from the fact that $\bar{M}_{\text{var}}$ involves a minimisation over all quadrature operators (plus a cut-off for values higher than the unit), and the last inequality follows from Eq.~\eqref{eq:monotonic_var}. When the minimum variance of the output state is already higher than unit, the monotonicity relation is trivially true. Thus, variance is a good measure of squeezing over convex mixtures of Gaussian states, although it is not monotonic when the allowed operations include conditional maps.


\section{Proof of Lem.~\ref{lem:PropertiesTkappa}}\label{app:prop_tkappa_proof}

In this appendix, we provide a proof for the properties of the resource measure $\tkappa$ as stated in Lem.~\ref{lem:PropertiesTkappa} in the main text and recalled here for convenience.

\proptk*

\begin{proof}
{\it(i)} Consider a Gaussian state $\rho^G[V]$ with covariance matrix $V$. Any convex decomposition of this state in terms of Gaussian states $\left\{ \rho^G(\lambda) \right\}_{\lambda}$ has the form
\begin{align}
\rho^G[V] = \int \rmd \lambda \, p_G(\lambda) \, \rho^G(\lambda)\,.
\end{align}
where the probability distribution $p_G$ is Gaussian (if the state is mixed) or a Dirac delta (if the state is pure). Furthermore, each state $\rho^G(\lambda)$ has a covariance matrix $W(\lambda)$ such that $W(\lambda)\leq V$ \cite[Lem.~3]{wolf2004gaussian}. This implies that $\kappa(\rho^G[V])\leq\kappa(\rho^G(\lambda))$ for all $\rho^G(\lambda)$, for any Gaussian decomposition. Then, given the optimal decomposition $\left\{ \bar{p}(\lambda),\bar{\rho}^G(\lambda) \right\}_{\lambda}$ for the measure $\tkappa$, we have that
\begin{align}
\tkappa\left(\rho^G\right)
&=
\int \rmd\lambda\,
\bar{p}(\lambda) \,
\kappa\left(\bar{\rho}^G(\lambda)\right) \nonumber \\
&\geq
\int \rmd\lambda\,
\bar{p}(\lambda) \,
\kappa\left(\rho^G[V]\right)
=
\kappa\left(\rho^G[V]\right)\,.
\end{align}
However, equality in the above equation can be achieved by $\rho^G$ itself, so $\tkappa(\rho^G[V]) = \kappa(\rho^G[V])$.

{\it(ii)} Consider the state $\tau = \int \rmd\lambda\, p(\lambda)\rho(\lambda)$. If we denote $\{q(\lambda,\mu),\rho^G(\lambda,\mu)\}_{\mu}$ as the optimal decomposition for each state $\rho(\lambda)$ such that
\begin{align}
    \tkappa(\rho(\lambda)) = \int \rmd\mu\, q(\lambda,\mu) \kappa\left(\rho^G(\lambda,\mu)\right)\,,
\end{align}
the set $\{p(\lambda)q(\lambda,\mu),\rho^G(\lambda,\mu)\}_{\lambda,\mu}$ is one possible decomposition of $\tau$. It then follows that
\begin{align}
    \tkappa(\tau) &\leq \int \rmd\lambda\,\rmd\mu\,p(\lambda)q(\lambda,\mu) \kappa\left(\rho^G(\lambda,\mu)\right) \\
    &= \int \rmd\lambda\, p(\lambda) \tkappa\left(\rho(\lambda)\right)\,.
\end{align}

{\it(iii)} We showed in (i) that for a Gaussian state $\rho^G$ it holds that $\tkappa(\rho^G)=\kappa(\rho^G)$. Using the fact that tensor products of two Gaussian states are Gaussian and the property of $\kappa$ introduced in Sec.~\ref{subsec:GaussianQRT} we have
\begin{align}
\tkappa\left(\rho^G\otimes\sigma^G\right) = \kappa\left(\rho^G\otimes\sigma^G\right) &= \max\Big\{\kappa\left(\rho^G\right),\kappa\left(\sigma^G\right)\Big\}\\&= \max\Big\{\tkappa\left(\rho^G\right),\tkappa\left(\sigma^G\right)\Big\}\,.
\end{align}

{\it (iv)} Monotonicity with respect to appending an ancillary system in a free state follows straightforwardly from property (iii) and from the definition of the measure $\kappa$. For the monotonicity under mixtures of conditional Gaussian operations, Eq.~\eqref{eq:generalForm_convG}, let us consider $\{p(\lambda),\rho_{AB}^G(\lambda)\}_{\lambda}$ to be the optimal decomposition of $\rho_{AB}$ such that $\tkappa(\rho_{AB}) = \int \rmd\lambda \, p(\lambda) \, \kappa(\rho_{AB}^G(\lambda))$. For a general operation $\Gamma(\cdot) = \int \rmd q \, \Phi_A^q \otimes M_B^q \left( \cdot \right)$, we denote the states $\rho_{AB}^G(\lambda)$ after partial selective measurements as 
\begin{align}
\rho^G_A(\lambda|q) = \frac{M_B^q \left( \rho^G_{AB}(\lambda) \right)}{h(q|\lambda)}
\end{align}
where $h(q|\lambda)=\tr\left[\rho^G_{AB}(\lambda)\left(\id_A\otimes\ket{q}\bra{q}\right)\right]$ is the probability of outcome $q$. Then,
\begin{align}
\tkappa\left(\Gamma(\rho_{AB})\right)
&\leq
\int \rmd\lambda \, \rmd q \,
p(\lambda) \, \tkappa
\Big(
\Phi^q_A\otimes M^q_B \left(\rho_{AB}^G(\lambda)\right)
\Big) \\
&=
\int \rmd\lambda \, \rmd q\,
p(\lambda) h(q|\lambda) \,
\tkappa\Big( \Phi^q_A \left( \rho^G_A(\lambda|q)\right)\Big) \\
&= \int \rmd\lambda \rmd q\, p(\lambda) h(q|\lambda) \, \kappa\Big( \Phi^q_A \left( \rho^G_A(\lambda|q)\right)\Big)\,,
\end{align}
where we used properties (ii) and (i) in the first line and the last line, respectively. Then, using the monotonicity of $\kappa$ under $\Phi^q\in\mathbb{O}_G$ and partial selective homodyne measurements (discussed in App.~\ref{app:Part_sel_HM}), we obtain
\begin{align}
\tkappa\left(\Gamma(\rho_{AB})\right) \leq& \int \rmd\lambda \rmd q\, p(\lambda) h(q|\lambda) \, \kappa\left( \rho_{AB}^G(\lambda) \right) \\
=& \int \rmd\lambda\, p(\lambda) \kappa\left( \rho_{AB}^G(\lambda) \right) = \tkappa(\rho_{AB})\,.
\end{align}

{\it (v)} Let us start from the only if direction. Consider a state $\rho\in\convG$ with optimal decomposition $\{p(\lambda),\rho^G(\lambda)\}$ such that
\begin{align}
\tkappa(\rho)
=
\int \rmd \lambda \, p(\lambda) \, \kappa(\rho^G(\lambda))
= 1\,.
\end{align}
By definition $\kappa(\rho) \geq 1$ for all states $\rho \in \mathcal{G}$, and since $\tkappa(\rho)=1$ we have that $\kappa(\rho^G(\lambda))=1$ for all $\lambda$. Due to the faithfulness of $\kappa$, it follows that $\rho^G(\lambda) \in \mathcal{F}_G$ for all $\lambda$. Then, $\rho$ is a convex combination of free states, which implies that
$\rho\in\mathcal{F}_{G^c}$, since $\mathcal{F}$ is convex.

For the if direction, let us assume that $\rho\in\mathcal{F}_{G^c}$. Then, by definition of the set of free states, a decomposition of $\rho$ in terms of free Gaussian states is possible
\begin{align}
\rho = \int \rmd \lambda \, p(\lambda) \, \rho_F^G(\lambda)\,,
\end{align}
where $\rho_F^G(\lambda) \in \mathcal{F}_{G}$ for all $\lambda$. Due to the faithfulness of $\kappa$, it follows that $\tkappa(\rho) \leq 1$, and since the value of $\kappa$ is lower bounded by $1$, we have that $\tkappa(\rho) = 1$.
\end{proof}


\section{Partial selective homodyne measurement}\label{app:Part_sel_HM}

In this appendix, we prove that the partial selective homodyne measurement as introduced within Def.~\ref{def:all_op_cgt} is an allowed operation for the Gaussian resource theories of squeezing and entanglement. The operation, defined as
\begin{align*}
M_B^q \ : \ &\hil_A \otimes \hil_B \rightarrow \hil_A \\
&\rho_{AB} \xmapsto{\phantom{hil_A \otimes \hil_B}} \Tr \left[\id_A \otimes \ket{q}\bra{q}_B \, \rho_{AB}\right]
\end{align*}
consists in the application of the partial projector on the eigenvector $\ket{q}$ of a given quadrature operator $\hat{Q}$, followed by tracing out the measured modes. Notice that the subsystems $A$ and $B$ can both be composed by multiple modes. Even though the quadrature eigenvector $\ket{q}$ is an infinitely squeezed state, one can show that this does not increase the amount of squeezing (or entanglement) in the state. Let us first recall how the partial selective homodyne measurement changes the covariance matrix of a state.

\begin{lemma}\label{lem:state_after_partial_HM}
\cite{giedke2002characterization,eisert2002distilling} Let $V$ be the covariance matrix of a state describing a $(n+m)$-mode system, which can be written as 
\begin{align}
    V = \begin{pmatrix}
    A&B\\B^T&C
    \end{pmatrix}\,,
\end{align}
where $A, B$, and $C$ are real matrices with size $2n\times2n$, $2n\times2m$, and $2m\times2m$ respectively. The covariance matrix $V'$ of the state after a selective homodyne measurement on the last $m$ modes is then given by
\begin{align}\label{eq:Mq_cov_mat}
V' = A - B\left(PCP\right)^{MP}B^T\,,
\end{align}
where $P=\id_m\oplus\mathbf{0}_m$ is a projection operator, and $MP$ denotes the Moore Penrose pseudo-inverse. 
\end{lemma}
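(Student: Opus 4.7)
The plan is to derive the post-measurement covariance matrix via the Wigner-function formalism, where Gaussian operations act most transparently. I would start by writing the Wigner function of the initial $(n+m)$-mode Gaussian state as a Gaussian on $(\mathbf{x}_A,\mathbf{x}_B)\in\mathbb{R}^{2n}\times\mathbb{R}^{2m}$ whose covariance is the block matrix built from $A$, $B$, $C$. Without loss of generality the quadrature measured on each of the last $m$ modes is taken to be $\hat{x}$ (a generic choice reduces to this by first conjugating $B$ with a passive Gaussian unitary that rotates the measured quadrature into $\hat{x}$, transforming $V$ by congruence and commuting with the eventual formula).

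Next, I would exploit the standard Wigner-function identity
\begin{align}
W_{M_B^q(\rho_{AB})}(\mathbf{x}_A) \;\propto\; \int W_{\rho_{AB}}(\mathbf{x}_A,\mathbf{x}_B)\, W_{\ket{q}\!\bra{q}}(\mathbf{x}_B)\, \rmd\mathbf{x}_B,
\end{align}
where $W_{\ket{q}\!\bra{q}}$ is a product of Dirac deltas fixing the $\hat{x}$-entries of $\mathbf{x}_B$ to $q$ while leaving the conjugate $\hat{p}$-entries free. To carry out the Gaussian integration without running into the singular nature of $\ket{q}\!\bra{q}$, I would regularize by approximating the quadrature eigenstate by a Gaussian pointer state with covariance $V^{\text{pt}}_{\varepsilon} = \varepsilon P + \varepsilon^{-1}(\id - P)$, recovering the ideal measurement as $\varepsilon\to 0$. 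At finite $\varepsilon$, convolving the marginal on $B$ with the pointer and completing the square in the joint Gaussian yields the standard Gaussian-conditioning Schur complement
\begin{align}
V'_\varepsilon \;=\; A - B\,\bigl(C + V^{\text{pt}}_{\varepsilon}\bigr)^{-1} B^T.
\end{align}

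To conclude, I would show that $\bigl(C + V^{\text{pt}}_{\varepsilon}\bigr)^{-1} \to (PCP)^{MP}$ as $\varepsilon\to 0$ by a block-matrix calculation in the decomposition $\mathbb{R}^{2m} = \operatorname{im}P \oplus \operatorname{ker}P$: the $\operatorname{ker}P$ sector of the inverse is suppressed by $\varepsilon$, while the surviving $\operatorname{im}P$ sector is exactly the pseudo-inverse of the restriction $PCP$. I would also record that the resulting $V'$ is independent of the outcome $q$ (only the first moments pick up a $q$-dependence, a generic feature of Gaussian conditioning that drops out of the same calculation) and, being a Schur complement of a positive semidefinite matrix, is itself a legitimate covariance matrix satisfying the uncertainty relation $V' + \tfrac{i}{2}\Omega \geq 0$ — a property required when plugging $V'$ back into the monotonicity arguments for $\kappa$ and $\tkappa$.

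The main obstacle is controlling the limit $\varepsilon\to 0$ when $C$ is itself degenerate on $\operatorname{ker}P$, because then naive inversion of $C + V^{\text{pt}}_{\varepsilon}$ develops singular contributions from several directions at once. The clean fix is to apply the block-inversion formula in the $(\operatorname{im}P,\operatorname{ker}P)$ decomposition, track the two scales $\varepsilon$ and $\varepsilon^{-1}$ separately, and verify term by term that only the $\operatorname{im}P$ piece survives the limit, with the surviving expression manifestly the Moore--Penrose pseudo-inverse of $PCP$ regardless of the rank structure of $C$ on $\operatorname{ker}P$.
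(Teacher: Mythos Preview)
The paper does not actually prove this lemma; it is quoted as a known result from the cited references \cite{giedke2002characterization,eisert2002distilling} and used as a black box in the proof of Prop.~\ref{lem:kappa_partial_measurement}. So there is no ``paper's own proof'' to compare against.

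That said, your sketch is a correct and standard derivation. Writing the conditional Wigner function as a Gaussian integral, regularizing the singular projector $\ket{q}\!\bra{q}$ by a proper Gaussian pointer with covariance $\varepsilon P + \varepsilon^{-1}(\id - P)$, and reading off the Schur complement $A - B(C + V^{\text{pt}}_{\varepsilon})^{-1}B^T$ is exactly how this formula is obtained in the literature. The limit $\varepsilon\to 0$ is unproblematic here: since $C$ is the covariance matrix of a physical $m$-mode state, its $\operatorname{im}P$ block (the position--position block) is strictly positive, so the block-inverse analysis you describe goes through without the degeneracy worry you flag at the end. The only cosmetic point is that the overlap identity you invoke produces the unnormalized conditional Wigner function; the normalization is the success probability and carries the $q$-dependence you correctly note drops out of $V'$.
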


Using this lemma, we can show that partial selective homodyne measurement is an allowed operation for both the Gaussian resource theory of squeezing and entanglement.

\begin{proposition}\label{lem:kappa_partial_measurement}
Partial selective homodyne measurements on Gaussian states are an allowed operation for the Gaussian theory of squeezing. That is, we have for all $\rho_{AB} \in \mathcal{F}^{\mathrm{sq}}_G$ that $M_B^q(\rho_{AB}) \in \mathcal{F}^{\mathrm{sq}}_G$.
\end{proposition}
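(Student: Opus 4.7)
The plan is to observe that partial selective homodyne measurement on a Gaussian state always returns a Gaussian state (on the unmeasured modes), so the content of the proposition is purely a statement about covariance matrices: one must show that if the input covariance matrix $V$ is free, the output covariance matrix $V'$ of Lem.~\ref{lem:state_after_partial_HM} is again free. For the Gaussian squeezing theory, the free set of covariance matrices is $\mathbf{F}_G^{\text{sq}} = \{V : V \geq \id\}$ (the vacuum sits at $V = \id$, and $\mathbf{F}_G$ is upward-closed). So the goal reduces to proving $V \geq \id \Rightarrow V' \geq \id$.

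Write $V = \begin{pmatrix} A & B \\ B^T & C \end{pmatrix}$ in the $A|B$ splitting. For a homodyne measurement of the $\hat{x}$-quadrature on every $B$-mode, the projector $P = \id_m \oplus \mathbf{0}_m$ of Lem.~\ref{lem:state_after_partial_HM} singles out the $x$-block, and the formula $V' = A - B(PCP)^{MP}B^T$ reduces to $V' = A - B_x\, C_{xx}^{-1}\, B_x^T$, where $B_x$ is the $2n \times m$ correlation block between $A$ and the $x$-quadratures of $B$, and $C_{xx}$ is the $m\times m$ $x$-block of $C$ (the pseudo-inverse is taken on the support of $C_{xx}$ in the degenerate case). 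The key observation is that the principal submatrix $\tilde V := \begin{pmatrix} A & B_x \\ B_x^T & C_{xx} \end{pmatrix}$ inherits the inequality $V \geq \id$, so $\tilde V \geq \id_{2n+m}$, i.e.\ $\begin{pmatrix} A - \id & B_x \\ B_x^T & C_{xx} - \id \end{pmatrix} \geq 0$. Assuming invertibility of $C_{xx} - \id$, the Schur complement lemma yields $A - B_x (C_{xx} - \id)^{-1} B_x^T \geq \id$. Then, since $C_{xx} \geq C_{xx} - \id > 0$ and inversion is order-reversing on positive operators, one has $B_x C_{xx}^{-1} B_x^T \leq B_x (C_{xx} - \id)^{-1} B_x^T$, so $V' = A - B_x C_{xx}^{-1} B_x^T \geq A - B_x (C_{xx} - \id)^{-1} B_x^T \geq \id$, as required.

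The main technical subtlety I expect is the regularization step: $C_{xx} - \id$ may fail to be invertible on the boundary (e.g.\ when the input saturates the free condition on the measured quadratures, as happens for the vacuum itself), and the formula for $V'$ already involves a Moore--Penrose pseudo-inverse. The clean way to handle this is to apply the argument to $V + \varepsilon \id$, which remains free, obtain $V'_\varepsilon \geq \id$, and then take $\varepsilon \to 0^+$, using continuity of $V \mapsto V'$ on the domain where $C_{xx} > 0$ (which always holds for valid covariance matrices by the Heisenberg uncertainty relation). A minor bookkeeping point, not a real obstacle, is to fix the quadrature ordering of the appendix once so that $P = \id_m \oplus \mathbf{0}_m$ really picks out the $x$-block of $C$; after that choice the chain of inequalities above gives $V' \geq \id_A$, hence $M_B^q(\rho_{AB}) \in \mathcal{F}_G^{\text{sq}}$.
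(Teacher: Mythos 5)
Your proof is correct and rests on the same core ingredients as the paper's: the reduction to covariance matrices with $\mathbf{F}^{sq}_G=\{V \,|\, V\geq\id\}$, the update rule of Lem.~\ref{lem:state_after_partial_HM}, and a Schur-complement argument on the $A|B$ block decomposition. The execution differs in one instructive way. You subtract the identity from the \emph{entire} principal submatrix $\bigl(\begin{smallmatrix} A & B_x \\ B_x^T & C_{xx}\end{smallmatrix}\bigr)$, which forces you to compare $(C_{xx}-\id)^{-1}$ with $C_{xx}^{-1}$ via operator anti-monotonicity of inversion and then to regularize with $V+\varepsilon\id$ because $C_{xx}-\id$ can be singular (e.g.\ for the vacuum). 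The paper instead subtracts $\id$ only from the $A$-block, using $V\geq\id_{2n}\oplus\id_{2m}\geq\id_{2n}\oplus\mathbf{0}_{2m}$, so that $W=\bigl(\begin{smallmatrix} A-\id_{2n} & B \\ B^T & C\end{smallmatrix}\bigr)\geq 0$; after conjugating by $\id_{2n}\oplus P$, the \emph{generalized} Schur complement of the (possibly singular) block $PCP$ is automatically positive semi-definite, and the identity $P(PCP)^{MP}P=(PCP)^{MP}$ turns this directly into $A-B(PCP)^{MP}B^T\geq\id_{2n}$. This asymmetric subtraction buys a shorter proof with no invertibility assumption, no monotonicity-of-inverse step, and no limiting argument, and it works verbatim with the pseudo-inverse appearing in Eq.~\eqref{eq:Mq_cov_mat}; your route is more elementary in the tools it invokes (ordinary Schur complements only) but pays for it with the $\varepsilon\to 0^+$ bookkeeping. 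Both arguments are sound.
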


\begin{proof}
Instead of considering the set of free states $\mathcal{F}^{\mathrm{sq}}_G$, we focus on the set of free covariance matrices, characterised as $\mathbf{F}^{sq}_G=\{V \, | \, V\geq\id\}$ for the theory of squeezing. Then, for any $(n+m)$-mode free covariance matrix $V\in\mathbf{F}^{sq}_G$, we can write 
\begin{align}\label{eq:submatrices_of_V}
    V=\begin{pmatrix}
    A&B\\B^T&C
    \end{pmatrix}
    \geq
    \begin{pmatrix}
    \id_{2n}&\mathbf{0}\\
    \mathbf{0}&\id_{2m}
    \end{pmatrix}
    \geq
    \begin{pmatrix}
    \id_{2n}&\mathbf{0}\\
    \mathbf{0}&\mathbf{0}_{2m}
    \end{pmatrix}\,.
\end{align}
The last inequality imply that the following matrix
\begin{align}
W =
\begin{pmatrix}
A- \id_{2n} & B \\
B^T & C
\end{pmatrix}
\end{align}
is positive semi-definite. We can now construct a different matrix by applying the operator $\id_{2n} \oplus P$ to $W$ by congruence, where $P=\id_m\oplus\mathbf{0}_m$.
\begin{align}
\tilde{V} =
(\id_{2n}\oplus P)W(\id_{2n}\oplus P) =
\begin{pmatrix}
A-\id_{2n} & BP \\
PB^T & PCP
\end{pmatrix}\,.
\end{align}
Clearly, this matrix is positive semi-definite as well. Then, its Schur complement must be positive semi-definite,
\begin{align}
\left(A-\id_{2n}\right) - BP(PCP)^{MP}PB^T \geq 0\,.
\end{align}
Furthermore, one can easily show that $P(PCP)^{MP}P$ satisfies all four conditions to be the Moore Penrose pseudo-inverse of $PCP$ \cite{watrous2018theory}. Due to the uniqueness of the pseudo-inverse, this implies that $P(PCP)^{MP}P=(PCP)^{MP}$. Then, we conclude that
\begin{align}
A - B(PCP)^{MP}B^T \geq \id_{2n}\,.
\end{align}
In Lem.~\ref{lem:state_after_partial_HM}, we showed that this is in fact the resulting covariance matrix after a partial selective homodyne measurement is performed on a Gaussian state with covariance matrix $V$. This closes the proof.
\end{proof}

The case of entanglement is easier to deal with since the partial selective homodyne measurement needs to be a local operation. Any separable Gaussian state has a covariance matrix $V_{AB} = V_A \oplus V_B$, where $V_A$ ($V_B$) is the covariance matrix of the state owned by Alice (Bob). Then, if we act with $M^q$ on part of the subsystem of Bob, we obtain a final covariance matrix $V'_{AB} = V_A \oplus V'_B$, where $V'_B$ has the form given in Eq.~\eqref{eq:Mq_cov_mat}, and similarly for Alice. Clearly, the output covariance matrix is that of a separable state.


\end{appendix}

\end{document}